\documentclass[a4paper,11pt,headings=normal,abstract=true]{scrartcl}

\usepackage{amsmath}
\usepackage{amsfonts}
\usepackage{amsthm}
\usepackage{subcaption}
\usepackage{natbib}        
\usepackage{algorithm}
\usepackage{algpseudocode}
\usepackage[hidelinks]{hyperref}
\usepackage{graphicx}
\usepackage{authblk}
\usepackage{amsmath,amsfonts,amssymb}
\usepackage{siunitx}
\usepackage{subcaption}
\usepackage{booktabs}
\usepackage{bbm}
\usepackage{algorithm}
\usepackage{algpseudocode}
\usepackage{graphicx}
\usepackage[export]{adjustbox}
\MakeRobust{\subref}

\newcommand{\R}{\mathbb{R}}                                     

\newcommand{\td}[2]{\frac{\mathrm{d}#1}{\mathrm{d}#2}}          

\newcommand{\innerprod}[2]{\left\langle #1,\, #2 \right\rangle} 

\newcommand{\diff}{\mathrm{d}}                                   
\newcommand{\one}{\mathbbm{1}}                                   
\newcommand{\bias}{\mathrm{bias}}                                

\DeclareMathOperator*{\argmin}{arg\,min}

\newcommand{\vbias}{V_{\mathrm{bias}}}

\newcommand{\bE}{\mathbb{E}}

\newcommand{\cA}{\mathcal{A}}

\newcommand{\cL}{\mathcal{L}}
\newcommand{\cP}{\mathcal{P}}

\newcommand{\BB}{\mathbf{B}}
\newcommand{\BC}{\mathbf{C}}

\newcommand{\BX}{\mathbf{X}}

\newcommand{\BZ}{\mathbf{Z}}

\newcommand{\ba}{\mathbf{a}}

\newcommand{\bu}{\mathbf{u}}
\newcommand{\bv}{\mathbf{v}}

\newcommand{\bx}{\mathbf{x}}

\newcommand{\bz}{\mathbf{z}}

\newtheorem{theorem}{Theorem}[section]

\newtheorem{proposition}[theorem]{Proposition}

\newtheorem{remark}[theorem]{Remark}

\begin{document}
\title{Optimal Bias Potentials via Ergodic Optimal Control and Generator Learning}
\author[1,4]{Lei Guo}
\author[2]{Carsten Hartmann}
\author[3]{Thomas Berger}
\author[4, 5]{Feliks Nüske}
\affil[1]{Otto-von-Guericke Universität, Institute of Mathematics, Magdeburg, Germany}
\affil[2]{Brandenburg University of Technology Cottbus-Senftenberg, Institute of Mathematics, Cottbus, Germany}
\affil[3]{Martin-Luther-Universität Halle-Wittenberg, Institute of Mathematics, Halle (Saale), Germany}
\affil[4]{Max-Planck-Institute for Dynamics of Complex Technical Systems, Magdeburg, Germany}
\affil[5]{LAAS-CNRS, Université de Toulouse, France}

\maketitle

\begin{abstract}
    We investigate the computation of optimal bias potentials for accelerating transitions between metastable states and for computation of equilibrium properties in molecular dynamics simulations. We formulate optimal biasing as an ergodic optimal control problem (OCP), which can be recast as a linear eigenvalue problem for the infinitesimal generator of the unbiased dynamics. We demonstrate that data-driven learning methods for the generator enable reliable solution of the OCP, computation of biasing potentials, extraction of equilibrium properties, and acceleration of state transitions. We also explore the relation of the control problem to coarse grained representations and learning of coarse grained dynamics.
\end{abstract}

\section{Introduction}
Molecular Dynamics (MD) simulations are a powerful computational tool to study the statistical and dynamical properties of systems at the molecular scale~\cite{frenkelUnderstandingMolecularSimulation2023}. A key limitation, however, is that MD is subject to the \emph{sampling problem}: integration time steps in standard MD implementations are limited to a few femtoseconds, while relevant dynamical processes typically occur at frequencies of milliseconds or beyond. As a result, vast amounts of simulation data need to be produced to achieve sufficient sampling of a system's configurational space; for large-scale systems, exhaustive sampling of the configurational space by standard MD simulations is impossible.

A widely-used approach to circumvent this problem is \emph{biased sampling}: we model MD simulations as a Langevin process
\begin{equation}
\label{eq:sde_langevin}
    \diff \BX_t = -\nabla V(\BX_t)\,\diff t + \sqrt{2\beta^{-1}}\,\diff \BB_t,
\end{equation}
with $V:\R^d \mapsto \R$ the potential energy, $\beta > 0$ the inverse temperature, and $\BB_t$ a $d$-dimensional Brownian motion. The invariant distribution of~\eqref{eq:sde_langevin} is the Boltzmann distribution with partition function $Z$:
\begin{equation}
\label{eq:inv_measure}
    \diff \mu(\bx) = \frac{1}{Z} \exp(-\beta V(\bx))\,\diff \bx,
\end{equation}
and we assume the dynamics~\eqref{eq:sde_langevin} are ergodic with respect to $\mu$. The sampling problem can then be traced back to trapping of the dynamics in local minima of the potential $V$ (corresponding to modes of the distribution $\mu$). To overcome trapping, one can apply a bias potential $\vbias$ and modify the dynamics as
\begin{equation}
\label{eq:sde_biased}
    \diff \BX_t = -\left[ \nabla V(\BX_t) + \nabla \vbias(\BX_t)\right]\,\diff t + \sqrt{2\beta^{-1}}\,\diff \BB_t.
\end{equation}
If the bias potential is chosen well, it can facilitate escape from local minima and drastically speed up exploration of the configurational space. Highly successful variants of this approach include umbrella sampling~\cite{torrie_nonphysical_1977} (US, pre-defined bias potential), metadynamics~\cite{laio_escaping_2002} (MetaD, bias potential defined on-the-fly), and the adaptive biasing force method~\cite{darve_calculating_2001} (ABF, bias potential defined on-the-fly). Other important techniques for accelerating state space exploration include temperature-based methods such as parallel tempering~\cite{swendsen_replica_1986}, or techniques based on escape statistics, such as parallel replica~\cite{voter_parallel_1998}.

The shape of the biased dynamics~\eqref{eq:sde_biased} suggests to look for an optimal biasing potential according to pre-defined criteria. A natural question would be if there is a bias such that the state space is explored as quickly as possible while perturbing the physical dynamics~\eqref{eq:sde_langevin} as little as possible. This leads to a re-formulation of~\eqref{eq:sde_biased} as an \emph{optimal control problem}, where the bias is interpreted as a control policy that is optimized towards a specified goal. Control-theoretic approaches to biased sampling have been explored in~\cite{schutte_optimal_2012} using Markov state models, in~\cite{hartmann_efficient_2012} using cross-entropy methods, in~\cite{vandeneijnden_rare_2012} using small-noise asymptotics, and in~\cite{holdijk_stochastic_2023} 
using Schrödinger bridges. 

\paragraph{Contributions.} Our work follows~\cite{schutte_optimal_2012}: we study biased sampling as an \emph{ergodic optimal control problem} of the form
\begin{equation}
\label{eq:ergodic_ocp}
    J^*(\bx) = \min_\bu \limsup_{T\rightarrow \infty} \bE\left[\frac{1}{T}\int_0^T \sigma \ell(\BX_s) + \frac{\beta}{4}\|\bu_s\|^2\,\diff s \,\vert \,\BX_0 = \bx  \right],
\end{equation}
where $\bx \in \R^d$ is a position, $\sigma > 0$ is a parameter, and $\bu$ is a control policy, that is a time-dependent stochastic process taking values in $\R^d$. As the guiding example for this paper, we will choose the running cost $\ell$ as the indicator function of the complement of a target set $A$,
\begin{equation*}
    \ell(\bx) = \one_{\bar{A}}(\bx)
\end{equation*}
that is, the goal is to drive the dynamics towards the target set $A$. The infinite time horizon in~\eqref{eq:ergodic_ocp}, combined with ergodicity, allows to reduce this problem to a linear eigenvalue problem for the infinitesimal generator $\cL$ of the physical dynamics~\eqref{eq:sde_langevin} (see Sec.~\ref{ssec:generator}), namely the smallest (ground-state) eigenvalue $\lambda > 0$ of
\begin{equation}
\label{eq:ev_problem_intro}
    (\sigma \ell - \cL)\varphi = \lambda \varphi\,,
\end{equation}
where the eigenvalue $\lambda=\lambda(\sigma)$ satisfies (cf.~\cite[Sec.~5.1]{schutte_optimal_2012})
\begin{equation}
\label{eq:limit_lambda_sigma}
    \left.\frac{\diff \lambda}{\diff \sigma}\right|_{\sigma=0} = \lim_{T\to\infty}\frac{1}{T}\int_0^T \ell(\bx_s)\,\diff s\,.
\end{equation}
Equation~\eqref{eq:limit_lambda_sigma} allows to recover equilibrium expectations of the cost function $\ell$ from the eigenvalues $\lambda$.

The corresponding eigenfunction $\varphi=\varphi(\cdot;\sigma)$ directly leads to the optimal biasing potential $\vbias$. In this paper, we make the following contributions:
\begin{enumerate}
    \item We show that the linear eigenvalue problem~\eqref{eq:ev_problem_intro} can be solved efficiently using modern machine learning techniques to approximate the generator $\cL$ of the physical dynamics, including generator extended dynamic mode decomposition~\cite{klus_data-driven_2020} (gEDMD) and neural network approximations. These learning algorithms require equilibrium samples from the Boltzmann distribution for training.
    \item As generator learning algorithms do not require time series data, samples from established enhanced sampling algorithms can be used for training, including simulations at higher temperatures or metadynamics simulations.
    \item We demonstrate that solutions obtained via generator learning are robust to training data size and parameter variations. We analyze their performance using multiple model problems in low dimension, as well as molecular dynamics simulations of the alanine dipeptide. We show that equilibrium expectations can be recovered robustly by extrapolating~\eqref{eq:limit_lambda_sigma}. Moreover, the resulting bias potentials lead to a drastic reduction of the mean first passage time into the target set $A$.
    \item We explain how the resulting method can be understood in the context of coarse-grained (CG) dynamics, namely as applying an optimal forcing in CG space.
\end{enumerate}

\paragraph{Open Problems.} The suggested approach provides a rich data-driven framework based on a rigorous mathematical footing provided by optimal control theory. However, it is not a complete solution to the sampling problem, as training data from the equilibrium distribution are still required. Strategies to obtain these data in an adaptive manner will be explored in future research.

\section{Dynamic Optimal Control Problems and the Generator}\label{sec:theory}

\subsection{Langevin Dynamics and Control Problems}
We study the Langevin equation~\eqref{eq:sde_langevin}. Most of the following can be extended to more general reversible processes, but we restrict the presentation to Langevin dynamics for simplicity.

We seek a minimally invasive \emph{forcing} or \emph{bias} that steers the dynamics towards optimizing a prescribed goal. Concretely, we consider the biased dynamics 
\begin{equation}
\label{eq:sde_controlled}
    \diff \BX_t = \left[-\nabla V(\BX_t) + \bu_t\right]\,\diff t + \sqrt{2\beta^{-1}}\,\diff \BB_t,
\end{equation}
where $\bu_t$ is a (a priori time-dependent) control input taking values in $\R^d$. We introduce a running cost (or stage cost) \begin{equation}\label{eq:running_cost}
    c: \R^d \times \R^d \to \R_{\geq 0},\ (\bx,\bu)\mapsto \sigma \ell(\bx) + \frac{\beta}{4}\|\bu\|^2,
\end{equation}
where $\sigma > 0$ is a given parameter (that will be varied later on) and $\ell: \R^d \mapsto \R_{\geq 0}$ is a non-negative state cost. The quadratic term penalizes excessive forcing to keep the controlled dynamics as close as possible to the physical equation, $\sigma$ sets the relative weight of the state cost against this control-effort penalty, while the specific coefficient $\beta/4$ is fixed by the linearization argument given below. The objective then is to minimize the long-time averaged cost subject to the biased dynamics~\eqref{eq:sde_controlled}:
\begin{equation}
\label{eq:ergodic_ocp}
\begin{split}
    J^*(\bx) &= \min_\bu \limsup_{T\rightarrow \infty} \bE\left[\frac{1}{T}\int_0^T c(\BX_s, \bu_s)\,\diff s \,\vert\, \BX_0 = \bx \right] = \min_\bu J(\bx, \bu), \\
    \text{s.t.} & \quad \BX_s \text{ solution of~\eqref{eq:sde_controlled}}.
\end{split}
\end{equation}

\paragraph{Guiding Example.} Throughout the text, we consider the problem of forcing the physical dynamics~\eqref{eq:sde_langevin} to spend as much time as possible in a fixed subset $A \subset \R^d$ of its state space. To this end, we choose the cost function
\begin{equation}
    \label{eq:guiding_example_cost}
    \one_{\bar{A}}(\bx) + \frac{\beta}{4} \|\bu\|^2,
\end{equation}
where $\one_A$ is the indicator function of the set $A$ and $\bar{A}$ is its complement. The set $A$ will typically be a state of physical significance, such as a metastable state of the physical dynamics~\eqref{eq:sde_langevin}. Figure~\ref{fig:example_dw} illustrates the problem on a one-dimensional double-well potential $V$: under the uncontrolled dynamics the stationary distribution $\mu(\bx) \propto e^{-\beta V(\bx)}$ places equal mass in both wells; the optimal controller reshapes the potential $V$ to $V_{\mathrm{opt}}(\bx)$, concentrating the invariant measure on $A$ while minimizing the control cost.

\begin{figure}
    \centering
\includegraphics[width=\linewidth]{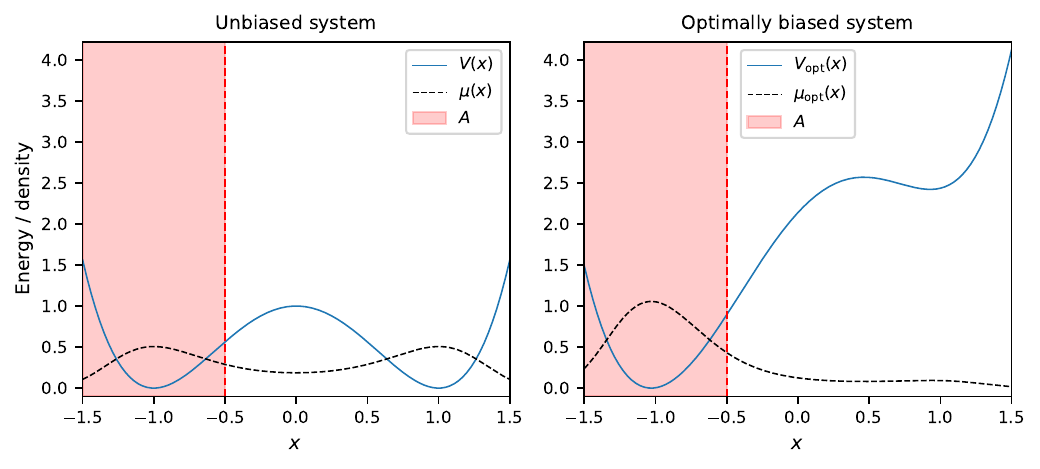}
    \caption{\textbf{Illustration of the ergodic control problem} on the one-dimensional double-well potential at inverse temperature $\beta=1$. Left: uncontrolled system with potential $V(x)$ and stationary distribution $\mu(\bx) \propto e^{-\beta V(\bx)}$. The target set $A$ (shaded) is the left well region; under the uncontrolled dynamics, the system spends roughly equal time in both wells. Right: optimally controlled system with reshaped potential $V_{\mathrm{opt}}(\bx)$ and the corresponding stationary distribution $\mu_{\mathrm{opt}}(\bx) \propto e^{-\beta V_{\mathrm{opt}}(\bx)}$, concentrating the invariant measure on $A$.}
    \label{fig:example_dw}
\end{figure}

\subsection{The Generator and the Optimal Forcing}
\label{ssec:generator}
The minimization problem~\eqref{eq:ergodic_ocp} is a stochastic \emph{optimal control problem (OCP)}. A central tool we will need to describe its solution is the \emph{Koopman generator} $\cL$ associated to the dynamics~\eqref{eq:sde_langevin}. The differential operator $\cL$ is defined by
\begin{equation}
    \label{eq:generator}
    \cL \phi = -\nabla V \cdot \nabla \phi + \frac{1}{\beta}\Delta \phi,
\end{equation}
acting on smooth observable functions $\phi: \R^d \to \R$. The generator is symmetric with respect to the Boltzmann-weighted inner product
\begin{equation*}
    \innerprod{\phi}{\psi}_\mu = \int_{\R^d} \phi(\bx)\psi(\bx)\,\diff \mu(\bx),
\end{equation*}
and we have the integration-by-parts formula~\cite{lelievre_partial_2016}:
\begin{equation}
\label{eq:integration_by_parts}
    -\innerprod{\cL \phi}{\psi}_\mu = -\innerprod{\phi}{\cL\psi}_\mu = \frac{1}{\beta}\int_{\R^d} \nabla \phi(\bx) \cdot \nabla \psi(\bx)\,\diff \mu(\bx).
\end{equation}
On a properly chosen domain, the (negative) generator $-\cL$ is non-negative and self-adjoint. Its spectrum is real-valued, non-negative and discrete~\cite{bakry_analysis_2014,lelievre_partial_2016}.

Optimal control theory characterizes the optimal cost-to-go $J^{*}$ from~\eqref{eq:ergodic_ocp}, as well as the associated optimal forcing $\bu^* = \argmin_\bu J(\bx, \bu)$, via the following result. It follows from the theory of ergodic control of non-degenerate diffusions under a near-monotone running cost, see \cite{arapostathis2012ergodic} for a detailed mathematical treatment; cf.~\cite{schutte_optimal_2012}.

\begin{proposition}
\label{prop:ev_problem_gen}
    The optimal cost-to-go $J^*(\bx)$ in~\eqref{eq:ergodic_ocp} satisfies the following properties:    \begin{itemize}
        \item[($i$)] The cost $J^*(\bx)$ is \textbf{independent} of $\bx$, i.e. $J^*(\bx) = \lambda \geq 0$ for all $\bx \in \R^d$.
        \item[($ii$)] The optimal control $\bu^*$ is a \textbf{stationary feedback control}, expressible as the gradient of a biasing potential $V_{\bias}$:
        \begin{equation*}
            \bu^*_s = -\nabla V_\bias(\BX_s).
        \end{equation*}
        \item[($iii$)] The optimal cost-to-go $\lambda$ and the biasing potential $V_\bias$ can be computed from the ground state of the following \textbf{linear eigenvalue equation}:
    \begin{equation}
        \label{eq:ev_problem}
        (\sigma \ell - \cL)\varphi = \lambda \varphi
    \end{equation}
    for the uncontrolled generator $\cL$ defined in~\eqref{eq:generator}. The smallest eigenvalue $\lambda$ equals the optimal cost $J^{*}(\bx)$ for any $\bx$. By the Perron-Frobenius Theorem, the eigenfunction $\varphi$ is strictly positive, and gives rise to the bias potential as
    \begin{equation*}
        V_\bias(\bx) = - \frac{2}{\beta}\log \varphi(\bx).
    \end{equation*}
   \end{itemize}
    The optimal controlled dynamics is then obtained by choosing
    \begin{equation*}
    \begin{split}
        \diff \BX^*_t
        &= \left[-\nabla V(\BX^*_t) + \frac{2}{\beta}\frac{\nabla \varphi(\BX^*_t)}{\varphi(\BX^*_t)}\right]\,\diff t + \sqrt{2\beta^{-1}}\,\diff \BB_t.
    \end{split}
    \end{equation*}
\end{proposition}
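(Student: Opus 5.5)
The plan is a verification argument built on the ergodic Hamilton--Jacobi--Bellman (HJB) equation, followed by a logarithmic (Hopf--Cole) transformation that linearizes it. Write the controlled generator as $\cL^\bu \phi = \cL\phi + \bu\cdot\nabla\phi$. The ergodic HJB equation for~\eqref{eq:ergodic_ocp} reads
\begin{equation*}
    \lambda = \min_{\bu\in\R^d}\Big\{ \cL W + \bu\cdot\nabla W + \sigma\ell + \frac{\beta}{4}\|\bu\|^2 \Big\},
\end{equation*}
with an unknown constant $\lambda$ and relative value function $W$. Minimizing pointwise in $\bu$ gives $\bu^* = -\frac{2}{\beta}\nabla W$. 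Substituting back yields
\begin{equation*}
    \lambda = \cL W - \frac{1}{\beta}\|\nabla W\|^2 + \sigma\ell .
\end{equation*}
Now set $W = -\log\varphi$. Since $\cL(-\log\varphi) = -\cL\varphi/\varphi + \frac{1}{\beta}\|\nabla\varphi\|^2/\varphi^2$, the quadratic gradient terms cancel exactly; this is why the coefficient $\beta/4$ is chosen. What remains is $\lambda\varphi = \sigma\ell\varphi - \cL\varphi$, which is~\eqref{eq:ev_problem}. The feedback $\bu^* = \frac{2}{\beta}\nabla\varphi/\varphi = -\nabla V_\bias$ with $V_\bias = \frac{2}{\beta}W = -\frac{2}{\beta}\log\varphi$ gives ($ii$) and the stated controlled SDE.

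Next I construct the solution of the eigenvalue problem. Conjugating by $e^{-\beta V/2}$ turns $\sigma\ell-\cL$ into a Schrödinger operator on $L^2(\diff\bx)$. Equivalently, I work directly in $L^2(\mu)$, where by~\eqref{eq:integration_by_parts} the operator $\sigma\ell-\cL$ is self-adjoint and bounded below (here $\ell$ is bounded and nonnegative). Under the standing assumption that $-\cL$ has discrete spectrum, the bounded perturbation $\sigma\ell$ preserves compactness of the resolvent. Hence the bottom $\lambda$ of the spectrum is an eigenvalue, and it satisfies $\lambda\ge 0$ since the quadratic form is nonnegative. Positivity of $\varphi$ comes from the variational characterization $\lambda=\min \{\innerprod{\sigma\ell\phi}{\phi}_\mu+\beta^{-1}\|\nabla\phi\|^2_\mu\}$ over normalized $\phi$. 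Because $|\phi|$ has the same Rayleigh quotient as $\phi$, a minimizer can be chosen with $\varphi\ge0$. Elliptic regularity together with the strong maximum principle (Harnack inequality) for $(\cL-\sigma\ell+\lambda)\varphi=0$ then gives $\varphi>0$; this is the Perron--Frobenius step. Since $\ell$ is only an indicator, $\varphi$ is $C^1$ but not $C^2$, so the HJB equation is understood in the $W^{2,p}_{\mathrm{loc}}$ sense.

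Finally comes verification, which proves ($i$) and the optimality in ($iii$). Fix any admissible $\bu$ and apply Itô's formula to $W(\BX_t)$ under~\eqref{eq:sde_controlled}. The HJB inequality $\cL^{\bu}W + c(\cdot,\bu)\ge\lambda$, with equality at $\bu^*$, gives
\begin{equation*}
    \frac{1}{T}\bE\!\int_0^T c(\BX_s,\bu_s)\,\diff s \ \ge\ \lambda - \frac{1}{T}\big(\bE W(\BX_T)-W(\bx)\big),
\end{equation*}
with equality for the feedback $\bu^*$. If the boundary term $\bE W(\BX_T)/T$ vanishes, taking the $\limsup$ gives $J(\bx,\bu)\ge\lambda=J(\bx,\bu^*)$ for every $\bx$. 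This shows both that $J^*\equiv\lambda$ and that $\bu^*$ is optimal.

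I expect this last step to be the main obstacle. One must show $\liminf_T \bE W(\BX_T)/T\ge 0$ for arbitrary controls, not only for the optimal one. For controls with infinite average cost the inequality holds trivially. Otherwise the restriction to controls with finite ergodic cost has to be combined with growth bounds: $W=-\log\varphi$ grows at most like $V$, and $V$ satisfies a Lyapunov condition (near-monotonicity, $\ell$ bounded). These together give tightness and the required bound on $\bE|W(\BX_T)|/T$. Ergodicity of the optimally controlled process, whose invariant density is proportional to $\varphi^2 e^{-\beta V}$ and integrable, handles the equality case. This is precisely the technical content of the near-monotone ergodic control theory in~\cite{arapostathis2012ergodic}, so I would invoke their verification theorem rather than redo these estimates.
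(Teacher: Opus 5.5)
Your proposal is correct in outline, and its skeleton matches the paper's Appendix~B: pointwise minimization in the ergodic HJB equation, the substitution $W=-\log\varphi$ with exact cancellation of the quadratic terms, and an appeal to the verification theory of Arapostathis--Borkar--Ghosh. The genuine difference is the direction of the existence argument.

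The paper \emph{assumes} an HJB solution pair $(W,\lambda)$ with $W\in C^2$ and $\inf W>-\infty$. It derives the eigenvalue equation from this pair, and then needs a separate Step~4 to identify $\lambda$ with the bottom of the spectrum, using positivity of $e^{-W}$ and Rayleigh--Ritz. You instead construct the ground state directly: compact resolvent of the bounded perturbation $\sigma\ell-\cL$, the $|\phi|$ trick, and Harnack. So $\lambda$ is the smallest eigenvalue by construction, and you run the Cole--Hopf transform backwards to obtain $W$.

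This route buys three things.
\begin{itemize}
\item A self-contained existence proof.
\item A correct regularity statement. With $\ell=\one_{\bar A}$ the semilinear equation forces $\cL W$ to jump across $\partial A$, so $W\in W^{2,p}_{\mathrm{loc}}$, not $C^2$ as the paper posits.
\item A clean equality case. Under $\bu^*$ the invariant density is proportional to $\varphi^2e^{-\beta V}$, and the stationary average of $c(\cdot,\bu^*)$ is $\big(\sigma\int\ell\varphi^2\,\diff\mu+\beta^{-1}\int\|\nabla\varphi\|^2\,\diff\mu\big)/\int\varphi^2\,\diff\mu=\lambda$, exactly the Rayleigh quotient.
\end{itemize}
The last point matters. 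In the paper, $\inf W>-\infty$ is part of the hypothesis, whereas your construction does not give $\sup\varphi<\infty$. For quadratic $V$ and unbounded $A$, the $L^2(\mu)$ ground state grows polynomially at infinity. So you are right not to rely on $W$ being bounded below.

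One slip in your last paragraph concerns the direction of the boundary-term bound. Your displayed inequality gives $\limsup_T\frac1T\bE\int_0^Tc\,\diff s\ge\lambda-\liminf_T\frac1T\bE W(\BX_T)$. So for arbitrary controls you must show $\liminf_T\bE W(\BX_T)/T\le0$, not $\ge0$. The bound $\ge0$ is what the equality case under $\bu^*$ needs, and it is automatic when $W$ is bounded below. Since you then propose to bound $\bE|W(\BX_T)|/T$, which covers both directions, this is not a gap.

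Also, near-monotonicity, $\liminf_{|\bx|\to\infty}\sigma\ell(\bx)>\lambda$, is a condition on the cost, not a Lyapunov condition on $V$. It fails in the double-well example, where $A=(-\infty,-0.5]$ is unbounded and $\lambda>0$. There the boundary estimate must come from confinement by $V$ together with the quadratic control penalty. This caveat applies equally to the paper's citation.
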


The essential content of Proposition~\ref{prop:ev_problem_gen} is that the optimal forcing is fully characterized by a \emph{linear equation} involving the \emph{physical} generator $\cL$ — an intractable nonlinear control problem is reduced to linear algebra. Intuitively, the ergodic (long-time) limit in~\eqref{eq:ergodic_ocp} requires only a stationary control, and also removes the dependence on the initial condition; deriving~\eqref{eq:ev_problem} then amounts to writing out the Hamilton–Jacobi–Bellman (HJB) equation for~\eqref{eq:ergodic_ocp} and applying a logarithmic transformation that linearizes it. We give a full derivation for the reader's convenience in Appendix A. Sec.~\ref{sec:num} shows how to exploit~\eqref{eq:ev_problem} numerically based only on equilibrium samples of the physical dynamics.

\paragraph{The prefactor $\beta/4$.} Replacing the specific prefactor $\beta/4$ in~\eqref{eq:running_cost} by a general coefficient $\eta$ leads to re-scaling of the generator in~\eqref{eq:ev_problem}, which amounts to a re-scaling of time in~\eqref{eq:sde_langevin}. We stick to the case $\frac{\beta}{4}$ for simplicity.

\subsection{Evaluation of Equilibrium Averages}\label{sec:equilibrium-average}
The solution of the linear eigenvalue problem~\eqref{eq:ev_problem} can subsequently be used to compute equilibrium averages of the unbiased dynamics~\eqref{eq:sde_langevin}. To this end, we highlight the dependence of the eigenvalue $\lambda$ on the pre-factor $\sigma$ by writing $\lambda(\sigma)$. By a logarithmic transformation (e.g.~\cite{schutte_optimal_2012}), the eigenvalue $\lambda(\sigma)$ can equivalently be written as 
\begin{equation}\label{eq:cgf}
\lambda(\sigma) = -\limsup_{T\to\infty}\frac{1}{T} \log v_\sigma(\bx,T),\quad \sigma > 0\,,
\end{equation}
where $v_\sigma$ denotes the following exponential average:
\begin{equation}\label{eq:mgf}
v_\sigma(\bx, T) = \bE\left[\exp\left(-\sigma \int_{0}^{T}\ell(\BX_{s})\,\diff s\right)\bigg|\,\BX_{0}=\bx\right],
\end{equation}
and $\BX_s$ is the solution of the physical Langevin dynamics~\eqref{eq:sde_langevin}. The map $\sigma\to v_\sigma(\bx, T)$, for fixed $\bx$ and $T$, is the moment generating function of the random variable 
\begin{equation*}
L_{T} = -\int_{0}^{T}\ell(\BX_{s})\, \diff s\,,
\end{equation*} 
parametrized by the initial datum $\bx$ and the terminal time $T$. Assuming that the moment generating function $v_\sigma$ is finite and strictly positive for all $\sigma$ close enough to zero, we can expand its logarithm about $\sigma = 0$: 
\begin{align*}
    \log v_\sigma(\bx, T) = - \sigma \bE[L_T\,\vert\,\BX_0 = \bx] + \frac{\sigma^2}{2}\left(\bE[L_T^2\,\vert\,\BX_0 = \bx] - \left(\bE[L_T\,\vert\,\BX_0 = \bx]\right)^2\right) - \ldots\,.
\end{align*}
Taking the limit $T \to \infty$, it follows that 
\begin{equation*}
\lambda(\sigma) = \limsup_{T\to\infty}\frac{\sigma}{T} \bE\left[L_{T}\,\vert\,\BX_0 = \bx \right] - \limsup_{T\to\infty}\frac{\sigma^{2}}{2 T} \left(\bE[L_T^2\,\vert\,\BX_0 = \bx] - \left(\bE[L_T\,\vert\,\BX_0 = \bx]\right)^2\right) + \ldots\,,
\end{equation*}
where all remaining terms are higher order in $\sigma$. Then, taking the derivative with respect to $\sigma$ at $\sigma = 0$, and using ergodicity of the Langevin equation~\eqref{eq:sde_langevin}, we find
\begin{align*}
    \left.\td{}{\sigma} \lambda(\sigma)\right|_{\sigma = 0} &= -\limsup_{T\to\infty}\frac{1}{T} \bE\left[L_{T} \,\vert\,\BX_0 = \bx\right] \\
    &= \int_{\R^d} \ell(\bx)\,\diff\mu(\bx) = \mathbb{E}^\mu[\ell(\bx)].
\end{align*}

\begin{remark}
In our experiments, we will estimate the equilibrium average $\mathbb{E}^\mu[\ell(\bx)]$ by extrapolating $\left.\td{}{\sigma} \lambda(\sigma)\right|_{\sigma = 0}$ from solutions of the eigenvalue problem~\eqref{eq:ev_problem} at different $\sigma$ close to zero. Alternatively, one can also run controlled simulations at small $\sigma$, use them to evaluate the cost $J^* = \lambda(\sigma)$ defined in~\eqref{eq:ergodic_ocp}, and extrapolate the rate to $\sigma = 0$. However, since $\varphi$ becomes constant in the limit $\sigma\to 0$, resulting in vanishing of the control force, there will be a trade-off between the simulation speed-up from using controlled simulations and a low-variance evaluation of the quantity of interest.
\end{remark}

\section{Numerical Solution}\label{sec:num}
We now consider the numerical approximation of the eigenvalue problem~\eqref{eq:ev_problem}. The key ingredient is the Rayleigh quotient of the symmetric operator 
$$
\mathcal{A}:=\sigma\ell-\cL,
$$
which is defined for any nonzero trial function $\phi$ by:
\begin{equation}\label{eq:rayleigh}
R[\phi] := \frac{\langle\phi,\mathcal{A}\phi\rangle_\mu}{\langle\phi,\phi\rangle_\mu}
= \frac{\sigma\int\ell\,\phi^2 \,\diff\mu + \tfrac1\beta\int\nabla\phi\nabla\phi^{\top}\diff\mu}{\int\phi^2 \,\diff\mu}.
\end{equation}
The smallest eigenvalue $\lambda$ in~\eqref{eq:ev_problem} is the minimizer of the Rayleigh quotient, see e.g.~\cite{teschl_mathematical_2014}:
\begin{equation}
    \label{eq:variational_rayleigh}
    \lambda = \min_{\phi \ne 0}R[\phi].
\end{equation}

This variational formulation provides a common starting point for the discretizations we consider below.

\subsection{Generator Extended Dynamic Mode Decomposition}\label{sec:gedmd}
Generator extended dynamic mode decomposition (\emph{gEDMD})~\cite{klus_data-driven_2020} provides an empirical data-driven approximation of the Koopman generator based on equilibrium samples. Here we will use gEDMD to compute the principal eigenvalue of $\mathcal{A}$ from the weak form of the eigenvalue problem (\ref{eq:ev_problem_intro}):  
given a set of $n$ basis functions $\psi_i: \R^d \to \R$, 
\begin{equation*}
    \psi(\bx) = \begin{bmatrix}
        \psi_1(\bx) & \ldots & \psi_n(\bx)
    \end{bmatrix} \in \R^n,
\end{equation*}
spanning an $n$-dimensional approximation subspace, and $m$ samples drawn from the Boltzmann distribution $\mu$, 
\begin{equation*}
    \BX = \begin{bmatrix}
        \bx_1 & \ldots & \bx_m
    \end{bmatrix} \in \R^{d \times m},
\end{equation*}
we form the following empirical \emph{mass and stiffness matrices} of dimension $n \times n$:
\begin{align}
\label{eq:estimators_generator}
    \BC^m_0 &= \frac{1}{m}\sum_{k=1}^m \psi(\bx_k) \otimes \psi(\bx_k), & \BC^m_\cL &= -\frac{1}{\beta m}\sum_{k=1}^m \nabla \psi(\bx_k) \cdot \nabla \psi(\bx_k)^\top,
\end{align}
where $\otimes$ denotes the outer product between vectors. Alongside $\BC_{0}^{m}$ and $\BC_{\cL}^{m}$, we form an additional matrix accounting for the zero-order term involving the running cost:
\begin{equation}
\label{eq:gedmd_matrix_cost}
    \BC^m_\ell = \sum_{k=1}^m \ell(\bx_k)\psi(\bx_k) \otimes \psi(\bx_k).
\end{equation}
Finally, we solve the generalized matrix eigenvalue problem 
\begin{equation}
\label{eq:gev_generator}
    (\sigma \BC^m_\ell - \BC^m_\cL) \bv^m = \hat{\lambda}^m \BC^m_0 \bv^m,
\end{equation}
and extract its smallest eigenvalue $\hat{\lambda}^m$, which serves as approximation to the true optimal cost-to-go $\lambda$. Assuming that the chosen basis is such that the eigenvector that corresponds to the eigenvalue $\hat{\lambda}^m$ is strictly positive,   the associated eigenvector $\bv^m$ provides an approximation to the bias potential by 
\begin{equation*}
    V^m_{\mathrm{bias}}(\bx) = - \frac{2}{\beta}\log\left[\sum_{i=1}^n \bv^m_i \psi_i(\bx) \right].
\end{equation*}
In the limit of infinite training data ($m\rightarrow \infty$), the eigenvalue $\hat{\lambda}^m$ converges to a limiting eigenvalue $\hat{\lambda}$~\cite{nuske_finite-data_2023}. The associated eigenvector $\bv$ encodes the minimizer of the Rayleigh quotient~\eqref{eq:rayleigh} restricted to the $n$-dimensional trial space. By the variational characterization~\eqref{eq:variational_rayleigh}, we have
 \begin{equation*}
        \lambda \leq \hat{\lambda}.
    \end{equation*}

   


\subsection{Neural Cole--Hopf Representation}\label{sec:nn}
Instead of inferring $\vbias$ as a linear combination of fixed basis functions, we also consider a nonlinear parametrization. Furthermore, as the true eigenfunction $\varphi$ in~\eqref{eq:ev_problem} is positive, we consider a positive parametrization based on a Cole--Hopf transformation,
\begin{equation}
\label{eq:cole-hopf-nn}
\varphi_\theta(\bx) := e^{-W_\theta(\bx)},
\end{equation}
where $W_\theta:\R^d\to\R$ is represented by a neural network.

For samples $\bx_1,\dots,\bx_m$ drawn from the stationary distribution $\mu$, the network parameters are obtained by minimizing the empirical Rayleigh quotient
\begin{equation}
\label{eq:rayleigh-nn}
R_m(\theta) = \frac{\sum_k\sigma\ell(\bx_k)\varphi_\theta(\bx_k)^2
+ \tfrac1{\beta}\sum_k\|\nabla\varphi_\theta(\bx_k)\|^2}
{\sum_k\varphi_\theta(\bx_k)^2}.
\end{equation}
Spatial derivatives $\nabla\varphi_\theta$ and derivatives of the objective with respect to the neural network parameters are
evaluated using automatic differentiation.

The Rayleigh quotient~\eqref{eq:rayleigh-nn} is invariant under multiplication of the eigenfunction by a nonzero constant $c$, $R_{m}[c\varphi_\theta]=R_{m}[\varphi_\theta]$, since both
its numerator and denominator scale by $c^2$. Hence, the normalization of the eigenfunction is not determined by Rayleigh-quotient minimization. We remove this redundancy by fixing a reference point $\bx_0$ and by enforcing $W_\theta(\bx_0)=0$, and therefore $\varphi(\bx_0)=1$, exactly for every value of $\theta$. This gauge fixes only the arbitrary normalization of the eigenfunction and does not alter the Rayleigh quotient or the associated control force, which depends on spatial derivatives of $W_\theta$.

The resulting finite-dimensional optimization problem 
$$
\theta^{*}\in\argmin_{\theta} R_m(\theta)
$$
is solved using the Adam optimizer. Since the optimization problem is non-convex, the calculations are repeated for multiple independent random initializations.

The neural and gEDMD approaches target the same variational problem. Their difference is in the trial space: gEDMD uses a fixed linear space with a closed-form generalized eigenvalue problem, whereas the neural approach uses a nonlinear, non-convex parametrization with positivity built in.



\subsection{Re-weighting}\label{sec:reweight}
Estimating either the gEDMD matrices~\eqref{eq:estimators_generator}-\eqref{eq:gedmd_matrix_cost} or the empirical Rayleigh quotient~\eqref{eq:rayleigh-nn} requires samples from the equilibrium distribution at the target inverse temperature $\beta$. For systems with large energetic barriers, direct sampling at the target temperature can be prohibitively expensive. We therefore use importance re-weighting to construct target-temperature averages from samples generated under an alternative equilibrium distribution. For training data sampled from an alternative distribution $\nu$, we denote the likelihood ratio (or re-weighting factor) between the Boltzmann distribution $\mu$ in~\eqref{eq:inv_measure} and $\nu$ by
\begin{equation*}
    w(\bx) = \frac{\diff\mu}{\diff\nu}(\bx).
\end{equation*}
Then every sample $\bx_k$ in the gEDMD matrices~\eqref{eq:estimators_generator}-\eqref{eq:gedmd_matrix_cost} or in the empirical Rayleigh quotient~\eqref{eq:rayleigh-nn} is weighted by $w(\bx_k)$.

\subsubsection{Temperature Re-weighting}\label{sec:temp_reweight}
We write $\mu_\beta$ for the equilibrium distribution~\eqref{eq:inv_measure} associated to the inverse temperature $\beta$. Let $\beta_0<\beta$ denote a reference inverse temperature. Samples generated from $\mu_{\beta_0}$ can be reweighted to the target distribution $\mu_{\beta}$ using the likelihood ratio
\begin{equation}
\label{eq:temp_reweight}
w(\bx) = \exp\left(-(\beta-\beta_0)V(\bx)\right)\propto\frac{\diff\mu_{\beta}}{\diff\mu_{\beta_0}}(\bx).
\end{equation}

\subsubsection{Bias-Potential Re-weighting}\label{sec:bias_reweight}
For molecular systems, direct temperature re-weighting typically suffers from insurmountable variance. Most popular enhanced sampling methods lead to a biased dynamics
$$
    \diff \BX_t = -\nabla( V(\BX_t)+F_{\mathrm{bias}}(\BX_t))\,\diff t + \sqrt{2\beta^{-1}}\,\diff \BB_t
$$
with stationary density $\mu_{\mathrm{bias}}\propto e^{-\beta(V+F_{\mathrm{bias}})}$. The corresponding reweighting factor is then
\begin{equation}
\label{eq:reweighting-bias}
w(\bx) = \exp\big(\beta F_{\mathrm{bias}}(\bx)\big).
\end{equation}
In metadynamics, the bias potential is the cumulative sum of all deposited Gaussian hill potentials. The re-weighting factors $w(\bx_k)$ can be directly extracted from the simulation data for all samples $\bx_k$.

Averages of observables $\phi$ with respect to the equilibrium distribution $\mu$ can then be recast as the self-normalised importance sampling estimator
\begin{equation}\label{eq:SNIS}
    \mathbb{E}^\mu[\phi(\bx)] = \frac{\mathbb{E}[\phi(\bx)w(\bx)] }{\mathbb{E}[w(\bx)]}  \approx \frac{\sum_k\phi(\bx_k)w(\bx_k) }{\sum_k w(\bx_k)} \,.
\end{equation}

\section{Low-dimensional Model Systems}\label{sec:num_res}
We begin by evaluating the method on three low-dimensional benchmark potentials of increasing complexity: a one-dimensional double-well, a two-dimensional lemon slice potential, and a two-dimensional three-hole potential. We introduce these systems below.


\subsection{Systems}\label{sec:sys}
\paragraph{One-dimensional Double-Well.} We consider
$$
 V(x):=(x^{2}-1)^{2},\quad x\in\R,
$$
with metastable wells located at $x=\pm1$ and a saddle point at $x=0$. The target region is $A:=(-\infty, -0.5]$, so the controller is expected to concentrate the invariant measure near the left well.

\paragraph{Two-dimensional Lemon Slice.} We write $(x,y)\in\R^2$ in polar coordinates
$$
r=\sqrt{x^{2}+y^{2}},\quad \phi =\arctan2(y, x),
$$
and consider
$$
V(x, y):=\cos(k\phi) + \frac{1}{\cos(0.5\phi)} + 10(r-1)^{2} + \frac{1}{r},\quad k=4.
$$
The potential contains two deep minima near $(0.7, \pm 0.7)$ and two shallow minima near $(-0.6, \pm 0.8)$. The target set is the neighborhood of the upper-left shallow minimum, 
$$
A=\Big\{(x,y)\in\R^2\,\Big|\,\Big|\arctan2(y,x)-\frac{3\pi}{4}\Big|\le\frac{\pi}{6},\,\,V(x,y)\le4.5\Big\}.
$$

\paragraph{Two-dimensional Three-Hole Potential.} For the third benchmark we consider
$$
\begin{aligned}
 V(x,y)&=3e^{-x^{2}-(y-\frac{1}{3})^2}-3e^{-x^{2}-(y-\frac{5}{3})^{2}}-5e^{-(x-1)^{2}-y^{2}}\\
 &\quad-5e^{-(x+1)^{2}-y^{2}}+ 0.2x^{4}+0.2(y-\frac{1}{3})^{4},\qquad\quad (x,y)\in\R^2.
\end{aligned}
$$
This potential contains two deep minima approximately at $(\pm1,0)$ and a shallow minimum approximately at $(0, 1.5)$. The target set is the upper shallow well
$$
A=\{(x,y)\in\R^2\,\big|\,y>1.0,\,\,V(x,y)\le-1.8\}.
$$

\begin{figure}[h!]
    \centering
\includegraphics[width=\linewidth]{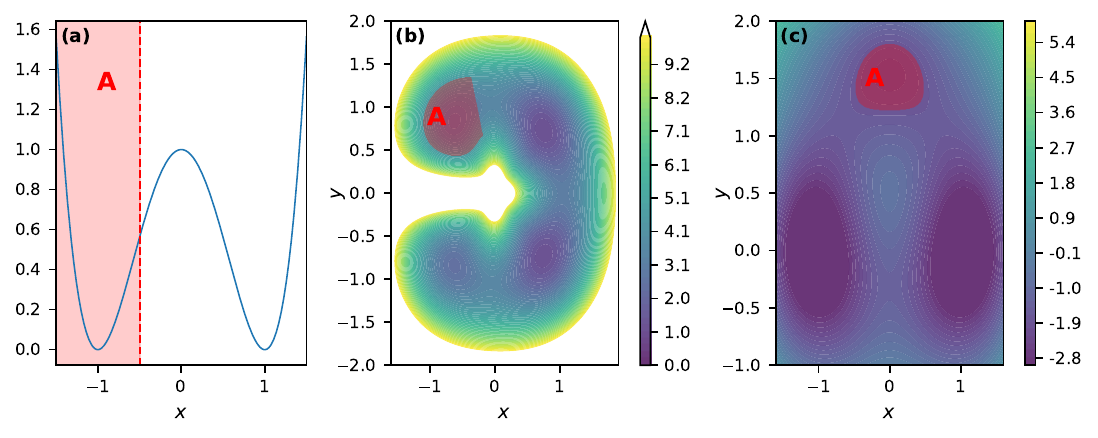}
    \caption{\textbf{Three benchmark systems}: (a) double-well; (b) lemon slice; (c) three-hole. The shaded area indicates the target set $A$ in each case.}
    \label{fig:sys}
\end{figure}

\subsection{Computational Settings}

\subsubsection{Data Generation}
We consider inverse temperatures $\beta\in[0.5, 4.0]$ and running-cost prefactors $\sigma\in[0.0, 2.0]$. When training directly at the target temperature, training data are generated by rejection sampling from the Boltzmann distribution restricted to a finite computational domain. The sampling domains are chosen to contain the metastable wells and the transition regions relevant to the target sets; the specific domain used for the double-well system is $[-2.0,2.0]$, for the lemon slice system it is $[-1.5, 1.5]\times[-1.5,1.5]$ and for the three-hole system it is $[-1.6, 1.6]\times[-1.0,2.0]$. For temperature re-weighting, the physical dynamics are instead sampled at a low reference inverse temperature $\beta_0$ and the resulting data are re-weighted according to~\eqref{eq:temp_reweight}.

\subsubsection{Basis Sets}\label{sec:rff}
A first choice for the basis set in~\eqref{eq:gev_generator} is as a collection of \emph{random Fourier features (RFF)}~\cite{rahimiRandomFeaturesLargescale2007a}: after sampling $n$ random frequencies $\omega_i \in \R^d$ according to a fixed spectral distribution $\rho$ on $\R^d$, and additionally drawing $n$ random phase shifts $b_i \in [0, 2\pi]$ uniformly, we form the basis functions as
\begin{equation}
\label{eq:random_features}
    \psi_i(\bx) = \cos\left(\omega_i^\top \bx + b_i \right),\quad i=1,\ldots,n\,.
\end{equation}
In the limit of large $n$, this approximation converges to a kernel discretization of the operator $\cA$, see~\cite{nuske_efficient_2023} for details. In particular, if the spectral distribution is normal, the large $n$ limit is a Gaussian kernel discretization.

As a deterministic alternative, we use tensor-product cubic B-splines of the form
\begin{equation*}
    \psi_{i_1\cdots i_d}(\bx)=\prod_{j=1}^{d}B_{i_j}^{(j)}(\bx^j),\quad n=\prod_{j=1}^{d} N_j,
\end{equation*}
where $B_{ij}^{(j)}(\bx^j)$ are one-dimensional B-splines on intervals $[a_j, b_j]$ along each coordinate direction $j=1,\dots,d$. The compact support of the B-splines yields a locally supported representation and correspondingly sparse matrix structure. As shown in Sec.~\ref{sec:opt-cost}, the RFF and B-spline bases give comparable estimates of the eigenvalue $\lambda$, whereas the B-spline basis provides a more robust reconstruction of the eigenfunction and the resulting bias potential. However, the tensor product construction is only feasible for low-dimensional examples.

For the numerical experiments, the RFF basis~\eqref{eq:random_features} uses a Gaussian spectral distribution with the number of features and bandwidth summarized in Table~\ref{tab:rff}. The corresponding truncated domain, spline degree, and basis size used for B-spline calculations are listed in Table~\ref{tab:bsp}. 

\begin{table}[t]
\centering
\begin{tabular}{lcc}
\toprule
System & $n$ & $\gamma_{\mathrm{rff}}$ \\
\midrule
Double-well & 50  & 0.3 \\
Lemon slice & 100 & 0.3 \\
Three-hole  & 100 & 0.35 \\
\bottomrule
\end{tabular}
\caption{Parameters used for the random Fourier feature basis. $n$ is the number of random features and $\gamma_{\mathrm{rff}}$ is the bandwidth of the Gaussian spectral distribution.}
\label{tab:rff}
\end{table}

\begin{table}[t]
\centering
\begin{tabular}{lcccc}
\toprule
System &Domain $\Omega$ & Degree & $N_j$ per dimension & Total $n=\prod_jN_j$ \\
\midrule
Double-well &$[-2, 2]$  & cubic & -- & $30$ \\
Lemon slice &$[-1.5, 1.5]\times[-1.5,1.5]$ & cubic & $15\times15$ & $225$ \\
Three-hole &$[-1.6, 1.6]\times[-1.0,2.0]$ & cubic & $10\times10$ & $100$ \\
\bottomrule
\end{tabular}
\caption{B-spline basis: spline
degree, and number of univariate splines per coordinate direction, for each
benchmark system.}
\label{tab:bsp}
\end{table}

\subsubsection{Neural Network Parameters}
For the neural Cole--Hopf (NCH) approximation (Sec.~\ref{sec:nn}), $W_\theta$ is represented
by a single-hidden-layer neural network with hyperbolic-tangent activation
functions. The network parameters are optimized by minimizing the empirical Rayleigh quotient~\eqref{eq:rayleigh-nn} using Adam~\cite{kingma_adam_2014}, with spatial
derivatives and parameter gradients evaluated by automatic differentiation
using \textsc{JAX}~\cite{bradbury_jax_2018}. The training data are generated by rejection sampling from the Boltzmann distribution at the target inverse temperature $\beta$. The numerical experiments are repeated for 5 independent network initializations, and the reported NCH results are averaged over these runs. Since the lemon slice and three-hole systems use different network widths and optimization schedules, the corresponding parameters are summarized in Table~\ref{tab:nch_parameters}.

\begin{table}[t]
\centering

\begin{tabular}{lcc}
\toprule
Parameter & Lemon slice & Three-hole \\
\midrule
$\beta$ & $3.0$ & $2.5$ \\
$\sigma$ & $0.7$ & $2.0$ \\
Training samples $m$ & $2\times10^{4}$ & $2\times10^{4}$ \\
Hidden units & $128$ & $64$ \\
Training epochs & $4000$ & $4000$ \\
Initial learning rate & $2\times10^{-3}$ & $2\times10^{-3}$ \\
Learning-rate decay factor & $0.5$ & $0.5$ \\
Decay epochs & $500,\,1000,\,2000$ & $2500,\,3000$ \\
Independent seeds & $5$ & $5$ \\
\bottomrule
\end{tabular}
\caption{
Numerical parameters used for the neural Cole--Hopf (NCH) calculations.
Training data are generated by rejection sampling from the Boltzmann
distribution at the target inverse temperature $\beta$. For each system, the
network is trained independently from 5 random initializations, and the
reported NCH results are averaged over the 5 runs.
}
\label{tab:nch_parameters}
\end{table}

\subsubsection{Reference Solution via Finite Elements}\label{sec:fem}
As a reference solution, we additionally solve~\eqref{eq:ev_problem} directly by a finite element (FEM) discretization of its weak form, implemented in \textsc{FEniCS} / \textsc{Dolfinx}~\cite{baratta_dolfinx_2023}, with meshes generated by Gmsh~\cite{geuzaine_gmsh_2009}.  No essential boundary condition is imposed. The weak formulation therefore
corresponds to the natural weighted Neumann condition $e^{-\beta V(x)}\nabla\varphi(x)\cdot n(x)=0$
on the boundary, equivalently a no-flux condition for the reversible dynamics. The computational domains are chosen sufficiently large that boundary effects are negligible in the region of interest. Mesh-refinement calculations are performed before selecting the production meshes listed in Table~\ref{tab:fem-domains}.

\begin{table}[t]
\centering
\begin{tabular}{lccc}
\toprule
System & Domain $\Omega$ & Element &  Mesh Size $h$ \\
\midrule
Double-well & $[-2,2]$ & P1 (1D, transfinite) & $0.005$ \\
Lemon slice & $[-2,2]\times[-2,2]$ & P1 (2D, unstructured triangular) & $0.1$ \\
Three-hole  & $[-2,2]\times[-1.5,2.5]$ & P1 (2D, unstructured triangular) & $0.1$ \\
\bottomrule
\end{tabular}
\caption{FEM reference solver: truncated domain, element type, and production mesh
size $h$ for each benchmark system, fixed after the mesh-refinement study of Section~\ref{sec:fem}.}
\label{tab:fem-domains}
\end{table}

\subsection{Robust Estimation of the Optimal Cost}\label{sec:opt-cost}
We first investigate the robustness of estimating the optimal cost $\lambda$ by solving the matrix eigenvalue problem~\eqref{eq:gev_generator}. For each $(\beta, \sigma, m)$ configuration, we run $20$ independent realizations and report the mean and standard deviation of the data-driven estimate, compared to the FEM reference, as shown in Fig.~\ref{fig:ME_lambda_RFF}.

\begin{figure}[h!]
    \centering
\includegraphics[width=\linewidth]{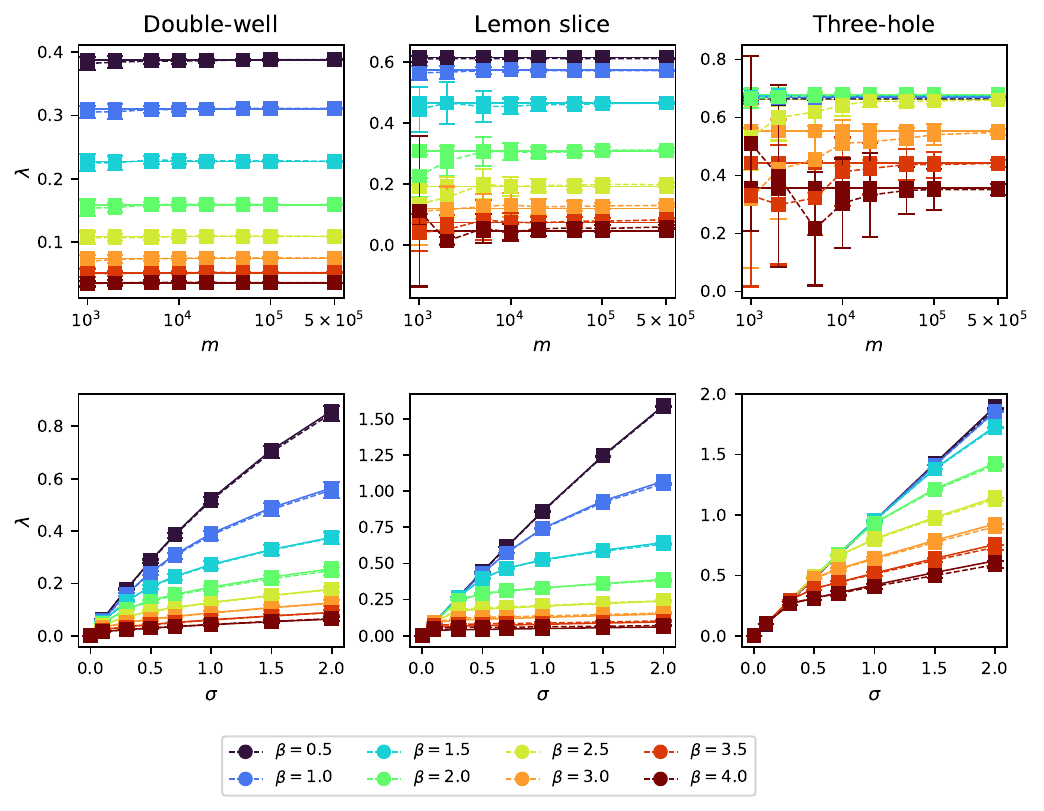}\caption{\textbf{Eigenvalue reconstruction} for the double-well (left), lemon slice (middle), and three-hole (right) systems. Top row: convergence with data size $m$ at fixed $\sigma=0.7$. Bottom row: dependence on the running cost prefactor $\sigma$ at fixed $m=2000$, $5\times10^{5}$, and $5\times10^{5}$ from left to right, respectively. Dashed lines with error bar indicate the RFF-gEDMD estimates (mean $\pm$ standard deviation over 20 independent realizations), and solid lines show the FEM reference. Colors indicate the inverse temperature $\beta$.}
    \label{fig:ME_lambda_RFF}
\end{figure}

Across all three systems and all $\beta$, the estimate stabilizes rapidly with $m$, with no systematic drift beyond $m\approx 2000$ for the double-well, $m\approx10^{4}$ for the lemon slice, and $m\approx10^{5}$ for the three-hole. For the latter system, only estimates at $\beta = 4.0$ converge slowly, which is the most extreme case. The variance across independent runs, quantified by the error bars, decreases monotonically with $m$ and becomes negligible for $m\ge10^{4}$. The gEDMD estimates thus closely track the FEM reference over the tested parameter range.

As a function of $\sigma$, at $\sigma\to0$, the operator $\cA$ approaches $-\cL$ and the leading eigenfunction approaches a constant, so the optimal cost approaches zero. Increasing $\sigma$ increases the energetic penalty associated with remaining outside the target and changes the principal eigen-pair accordingly. Results in Figure~\ref{fig:ME_lambda_RFF} are based on an RFF basis. We also repeated the
same calculation using the B-spline basis. The results are shown in Fig.~\ref{fig:ME_lambda_Bsp} in Appendix~\ref{sec:app_validation} and are nearly identical. Thus, the eigenvalue estimates are not strongly dependent on the particular basis used.

Beyond providing the optimal cost, the eigenvalue can also be used to recover equilibrium averages. As shown in Sec.~\ref{sec:equilibrium-average}, perturbation of the eigenvalue problem around $\sigma = 0$ yields
\begin{align*}
    \frac{\lambda(\sigma)}{\sigma} & =\mathbb{E}^{\mu}[\ell]+ \mathcal{O}(\sigma), \qquad \sigma \to 0^+.
\end{align*}
This provides a second use of the data-driven spectral approximation: once equilibrium training data are available, the eigenvalue problem can be evaluated repeatedly for small positive values of $\sigma$ without generating additional training data. Fig.~\ref{fig:Convergence} confirms that for each of the three model systems, $\lambda(\sigma)/\sigma$ reliably approaches the corresponding independently evaluated equilibrium average as $\sigma$ decreases. In these examples, the equilibrium average equals
\begin{equation*}
    \bE^\mu[\ell] =\bE^\mu[\one_{\bar{A}}] = 1 - \mu(A).
\end{equation*}

\begin{figure}[h!]
    \centering
\includegraphics[width=\linewidth]{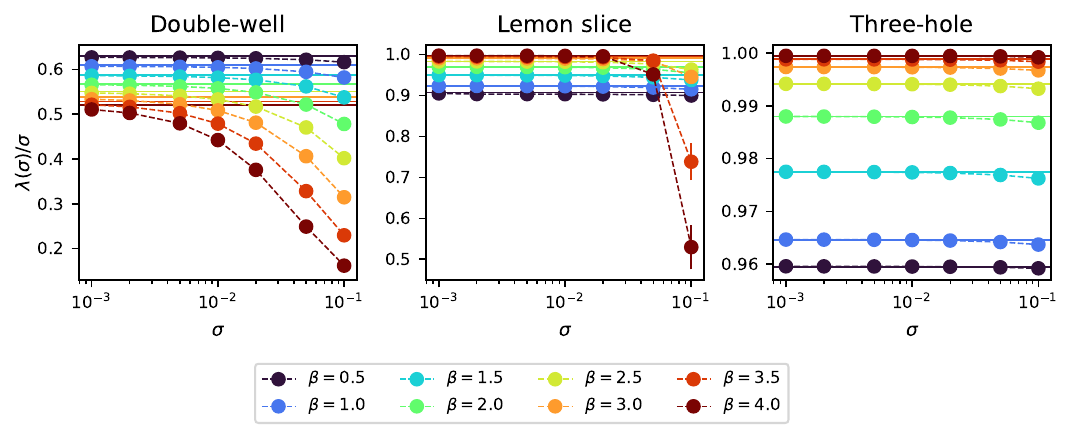}\caption{\textbf{Small-$\sigma$ convergence of the eigenvalue.} The ratio $\lambda(\sigma)/\sigma$ is shown as a function of $\sigma$ for the double-well (left), lemon slice (middle), and three-hole (right) systems. Dashed curves with markers show the RFF-gEDMD estimates (mean $\pm$ standard deviation over 20 independent realizations) at fixed $m=10^{4}$, $5\times10^{5}$, and $5\times10^{5}$ from left to right, respectively. The solid horizontal lines denote the corresponding equilibrium  averages $\mathbb{E}^{\mu}[\ell]$, evaluated independently from the equilibrium measure. Colors indicate the inverse temperature $\beta$. For the three-hole system, we extended the sampling domain to $\Omega=[-3,3]^2$ to account for the spread of the invariant measure at high temperature $\beta = 0.5$, using $n=200$ RFFs with $\gamma_{\mathrm{rff}}=0.5$.}
    \label{fig:Convergence}
\end{figure}

\subsection{Recovery of the Optimal Bias Potential}\label{sec:potentials}
Next, we examine the recovery and spatial structure of the optimal bias potential $V_{\mathrm{bias}}$ and the resulting optimal potential $V_{\mathrm{opt}} = V + V_{\mathrm{bias}}$. This comparison provides a more stringent test, since $V_{\mathrm{bias}} = -(2/\beta)\log \varphi$ depends directly on its spatial structure and is sensitive to small values of the eigenfunction $\varphi$.

For the double-well system (Fig.~\ref{fig:ME_dw_bias_Bsp}), the bias preferentially raises the competing right well relative to the target region in the left well. Consequently, $V_{\mathrm{opt}}$ develops a single dominant minimum at the target for all $\beta$. For sufficiently large $\sigma$, the competing minimum is effectively eliminated. The gEDMD estimates with a B-spline basis (Fig.~\ref{fig:ME_dw_bias_Bsp}) agree closely with the FEM reference over the full range of parameters shown. The corresponding RFF approximations are shown in Appendix~\ref{sec:app_validation} and are nearly identical.

\begin{figure}[h!]
    \centering
\includegraphics[width=0.8\linewidth]{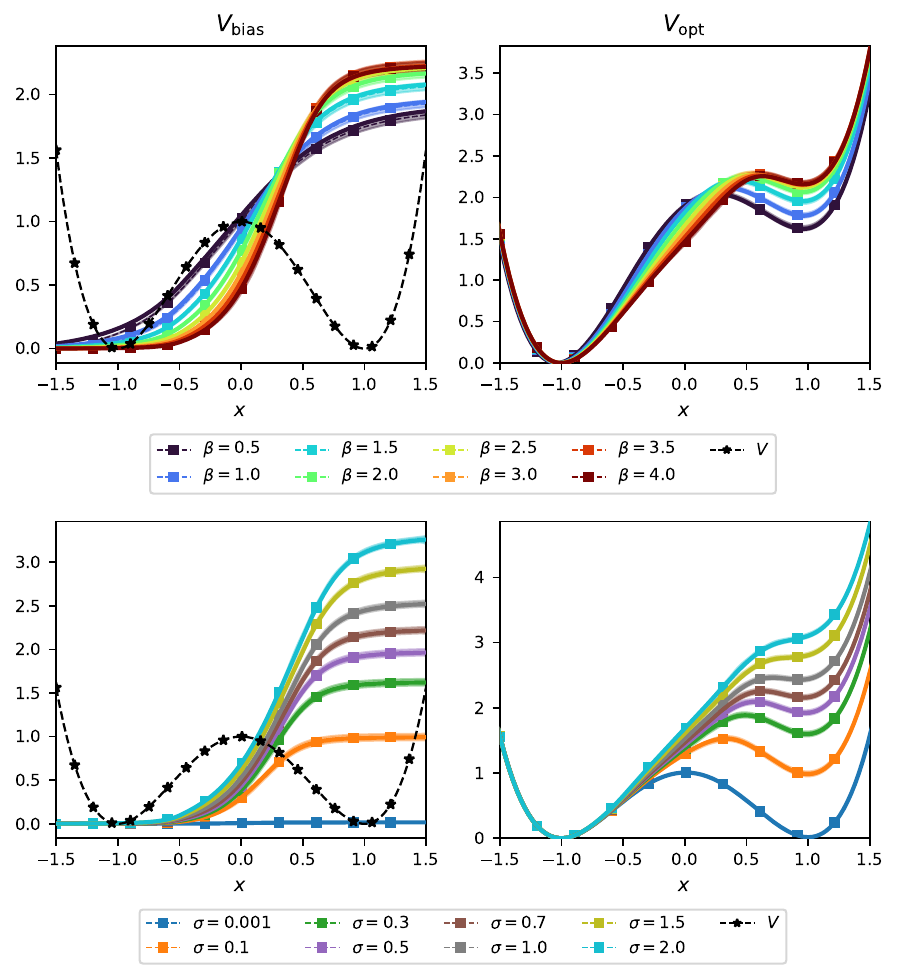}
    \caption{\textbf{Bias potential $V_{\mathrm{bias}}$ (left column) and optimal potential $V_{\mathrm{opt}}$} (right column) for the double-well system, obtained using B-spline gEDMD with $m=10^{4}$ samples. Top row: varying inverse temperature $\beta$ at fixed $\sigma=0.7$. Bottom row: varying running-cost prefactor $\sigma$ at fixed $\beta=4.0$. Dashed lines with markers show the mean over 20 independent realizations, with shaded bands indicating one standard deviation; solid lines of matching color show the FEM reference. The dashed black curve ($*$) shows the uncontrolled potential~$V$.}
    \label{fig:ME_dw_bias_Bsp}
\end{figure}

Similarly, the lemon slice system provides a two-dimensional test of the
reconstructed optimal landscape. Fig.~\ref{fig:ME_ls_Vopt} compares the FEM, RFF, and B-spline estimates of $V_{\mathrm{opt}}$ at fixed $\sigma=0.7$ for $\beta=1$ and $\beta=3$. In both cases, the optimal control reshapes the
landscape in favor of the target region in the upper-left part of the domain
relative to the competing wells. Both data-driven approximations reproduce
this global reorganization and the principal basin structure of the FEM
reference.

At $\beta=1$, the RFF and B-spline reconstructions are in good qualitative
agreement with the FEM solution, recovering the locations and relative
structure of the principal wells. At $\beta=3$, the target-directed reorganization remains clearly captured by both approximations, but differences in the reconstructed landscape become more visible. In particular, the RFF estimate exhibits stronger small-scale spatial oscillations, whereas the B-spline approximation gives a smoother
reconstruction while retaining the dominant basin structure. Thus, although
both representations recover the physically relevant reshaping of the
landscape, the B-spline basis provides a more regular reconstruction of $V_{\mathrm{opt}}$ in the low-temperature regime.

\begin{figure}[h!]
    \centering
\includegraphics[width=\linewidth]{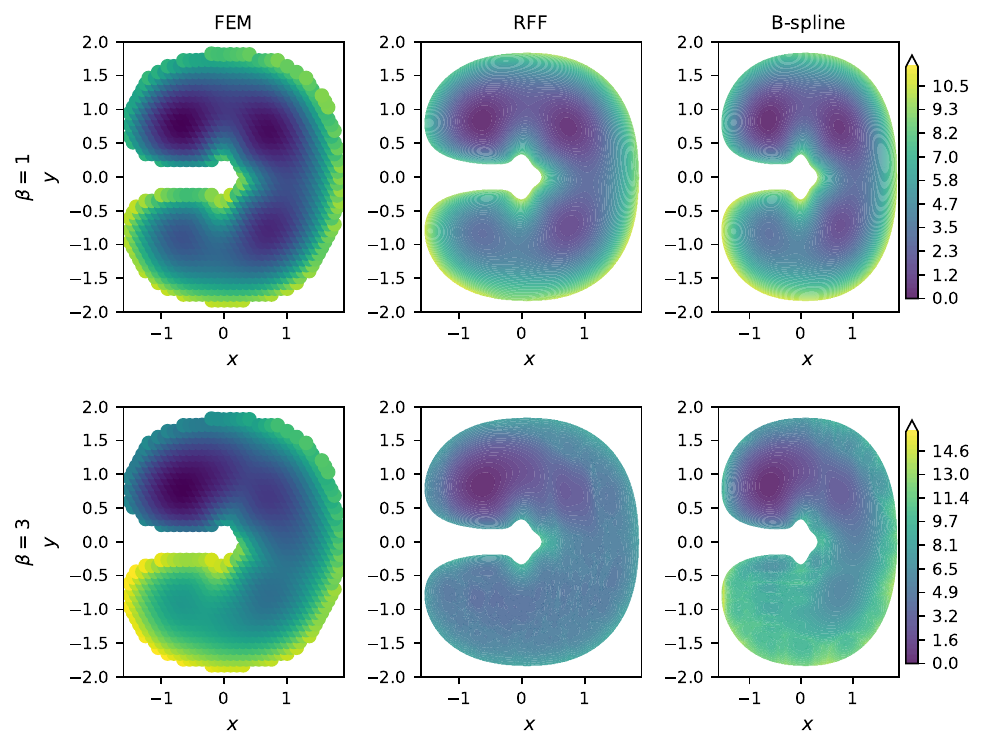}
    \caption{\textbf{Optimal potential $V_{\mathrm{opt}}$ for the lemon slice system} at fixed $\sigma=0.7$. Rows correspond to $\beta=1$ (top) and $\beta=3$ (bottom), and columns show the FEM reference, RFF-gEDMD reconstruction, and B-spline-gEDMD reconstruction, respectively. Within each row, all three methods are shown on the same color scale. Both data-driven reconstructions capture the target-favoring reorganization of the optimal landscape observed in the FEM reference.}
    \label{fig:ME_ls_Vopt}
\end{figure}

The three-hole system provides a more demanding two-dimensional example
because the prescribed target is the shallow upper well, while the uncontrolled
landscape contains two competing low-energy wells. Fig.~\ref{fig:ME_th_Vopt}
compares the FEM, RFF, and B-spline estimates of $V_{\mathrm{opt}}$ at fixed
$\sigma=2$ for $\beta=1$ and $\beta=2.5$. At $\beta=1$, the optimal control reorganizes the landscape so that the upper target basin is energetically
favored relative to the competing lower wells. Both RFF and B-spline reconstructions reproduce this overall structure and the locations of the principal basins observed in the FEM reference.

At $\beta=2.5$, the target-directed modification of the landscape remains
visible in both data-driven reconstructions. The principal upper basin and the structure of the competing lower region are retained, although local
deviations from the FEM reference become more apparent. The RFF estimate again displays somewhat stronger small-scale variations, while the B-spline representation produces a smoother landscape and more clearly preserves the
large-scale structure of the FEM solution. Despite these local differences,
both basis representations recover the essential effect of the optimal control: a reorganization of the physical energy landscape that favors the prescribed target region.

\begin{figure}[h!]
    \centering
\includegraphics[width=\linewidth]{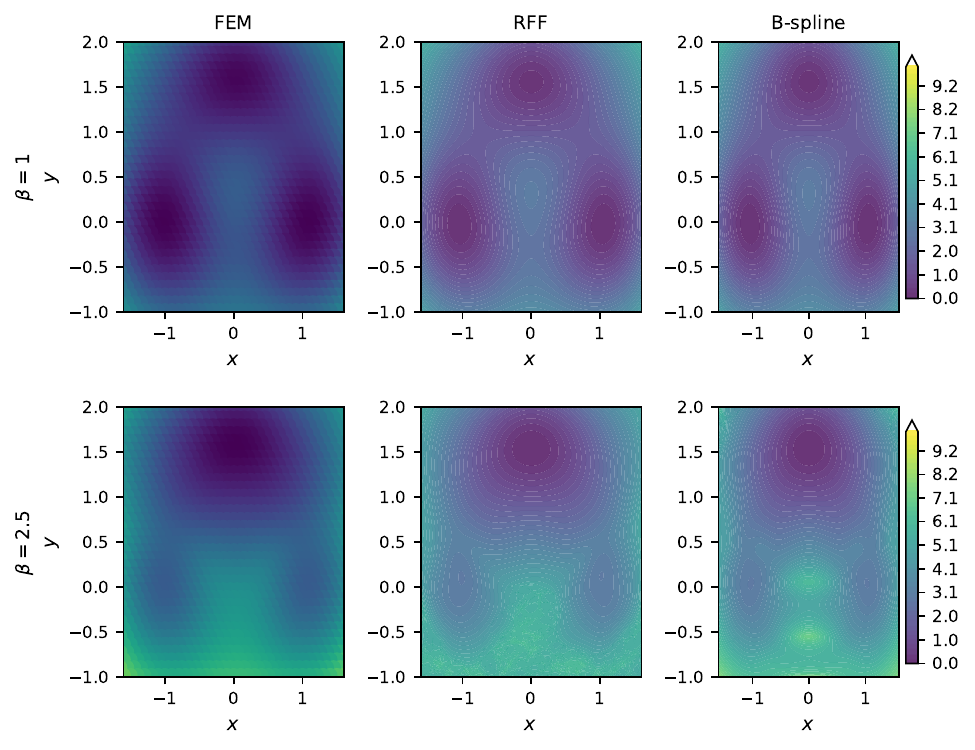}
    \caption{\textbf{Optimal potential $V_{\mathrm{opt}}$ for the three-hole system} at fixed $\sigma=2$. Rows correspond to $\beta=1$ (top) and $\beta=2.5$ (bottom), and columns show the FEM reference, RFF-gEDMD reconstruction, and B-spline-gEDMD reconstruction, respectively. Within each row, all three methods are shown on the same color scale. Both data-driven reconstructions capture the preferential stabilization of the upper target basin observed in the FEM reference.}
    \label{fig:ME_th_Vopt}
\end{figure}

We additionally solve the same variational problem using the neural Cole--Hopf representation introduced in Sec.~\ref{sec:nn}. In contrast to the linear basis expansions, the parametrization $\varphi_{\theta}=e^{-W_{\theta}}$ enforces positivity of the ground-state approximation by construction. Fig.~\ref{fig:nch} compares the NCH reconstruction with the FEM reference for the more challenging two-dimensional configurations. For the lemon slice system at $\beta=3$ and $\sigma=0.7$, the NCH reproduces the large-scale angular structure of optimal landscape and the preferential stabilization of the upper-left target basin. Similarly, for the three-hole system at $\beta=2.5$ and $\sigma=2$, the neural solution recovers the dominant target-directed reshaping of the landscape, with the upper target basin energetically favored relative to the lower region. In both cases, the NCH solutions extrapolate more smoothly to rarely sampled states than the linear basis expansions, but the relative weights of the competing states are not as accurate as for the gEDMD estimates, especially for the lemon slice case. These comparisons highlight an important distinction between estimating the principal eigenvalue and reconstructing the associated eigenfunction. While the eigenvalue is comparatively insensitive to the choice of approximation space, local errors in the eigenfunction can be amplified when constructing the bias potential through the logarithmic transformation. 

\begin{figure}[h!]
    \centering
    \includegraphics[width=0.8\linewidth]{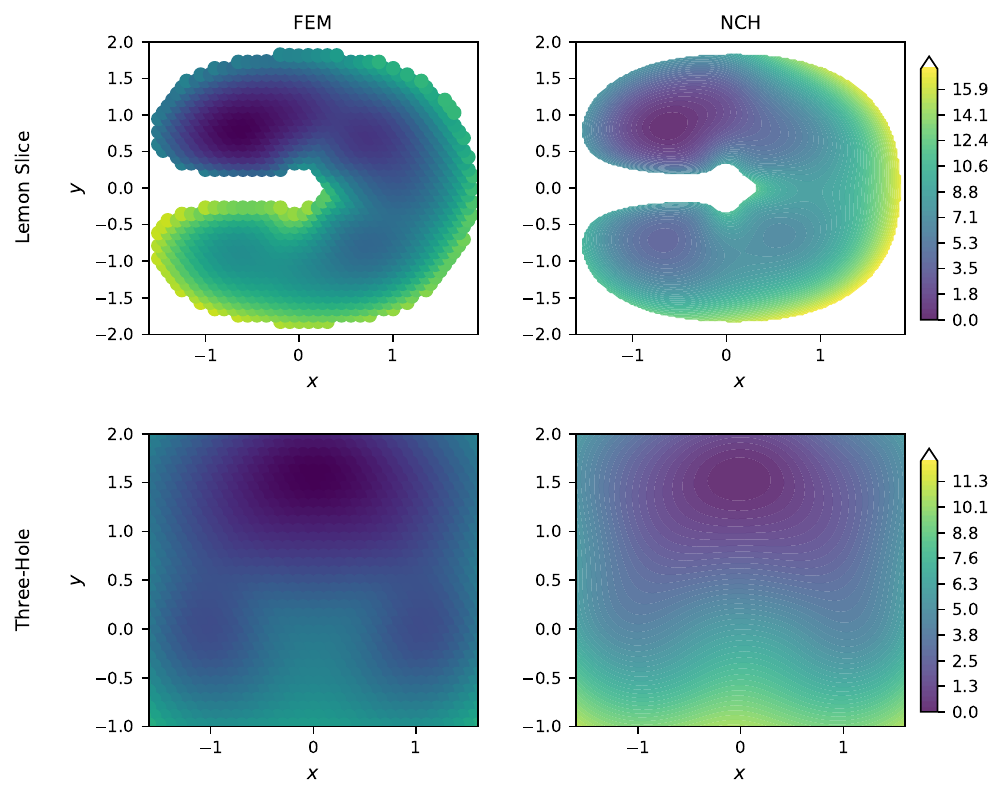}
    \caption{\textbf{Optimal potential $V_{\mathrm{opt}}$ from FEM reference (left) and neural Cole--Hopf (NCH) reconstruction (right)}. Top row: Lemon Slice system at $\beta=3$ and $\sigma=0.7$. Bottom row: Three-Hole system at $\beta=2.5$ and
    $\sigma=2$. The NCH results are averaged over five independent network initializations. The positivity-preserving neural parametrization
    reproduces the large-scale target-favoring reorganization of the optimal landscape while yielding a smooth reconstruction across the domain.}
    \label{fig:nch}
\end{figure}

\subsection{First-Passage-Time Validation}\label{sec:mfpt}
Upon visual inspection, the learned optimal potentials discussed in the previous section are expected to accelerate transitions into the target set under biased simulations. To quantify this effect, we compare the expected first-passage time (FPT) to the target set $A$ under the uncontrolled and controlled dynamics.  Starting from a fixed initial condition $x_0$ chosen within one of the competing states, we simulate $N_{\mathrm{traj}}=100$ independent trajectories of the biased dynamics (based on the B-spline approximation) using the Euler--Maruyama scheme with time step $\Delta t=10^{-3}$. For a single trajectory $\BX_t^k$, the first-passage time is defined as
$$
\tau(\BX_t^k) := \inf\{t\ge0 \mid \BX_{t}^k\in A\},
$$
and the mean first-passage time is estimated as
\begin{equation}
    \widehat{\mathrm{MFPT}}=\frac{1}{N_{\mathrm{traj}}}
    \sum_{k=1}^{N_{\mathrm{traj}}}\tau(\BX_t^k).
\end{equation}
For each controlled configuration $(\beta,\sigma)$, the calculation is repeated for 20 independently reconstructed controllers. We report the mean
of the resulting 20 MFPT estimates, with error bars indicating one standard deviation across the reconstructions. For the uncontrolled dynamics, a single ensemble of $N_{\mathrm{traj}}=100$ independent trajectories is used for each $\beta$. For the biased simulations, we limited the maximal runtime of each trajectory to a limit $t_{\max}$, which was chosen depending on $\beta$ based on the MFPT estimate for the unbiased dynamics. We have verified that all biased simulations reached the target set before time $t_{\max}$.

Figure~\ref{fig:mfpt} compares the mean first-passage time to the target set $A$ under the controlled and uncontrolled dynamics. The uncontrolled reference depends on $\beta$ but not on $\sigma$, whereas the controlled MFPT decreases as the state-cost prefactor is increased. For all three benchmark systems, the controlled MFPT quickly decreases compared to the uncontrolled value with increasing $\sigma$. The reduction is most pronounced at large $\beta$, where transitions under the uncontrolled dynamics become increasingly slow. In the high-$\beta$ regime, the uncontrolled MFPT reaches values on the order $O(10^2)$--$O(10^3)$, whereas the reconstructed controller reduces the transition time by up to three orders of magnitude. These observations confirm quantitatively that the learned optimal biasing potentials can significantly decrease the time it takes to reach the target set, as intended.

\begin{table}[t]
\centering
\begin{tabular}{lcc}
\toprule
System & $x_0$ & $t_{\max}(\beta)$ \\
\midrule
Double-well & $1.0$ & $[15, 30, 60, 90, 150, 220]$ \\
Lemon slice & $(0.7, 0.7)$ & $[45, 135, 435, 1425, 4950, 11400]$ \\
Three-hole & $(-1.0, 0.0)$ & $[40, 90, 240, 750, 2200, 7450]$ \\
\bottomrule
\end{tabular}
\caption{Settings for mean first-passage-time calculations. For each system, the maximum simulation horizon is chosen depending on $\beta$ based on the MFPT of the physical dynamics. The entries
in the third column correspond to $\beta=\{0.5,1.0,1.5,2.0,2.5,3.0\}$. For all reported simulations, we verified that all trajectories reached the target within the prescribed simulation horizon.}
\label{tab:mfpt}
\end{table}

\begin{figure}[h!]
    \centering
\includegraphics[width=\linewidth]{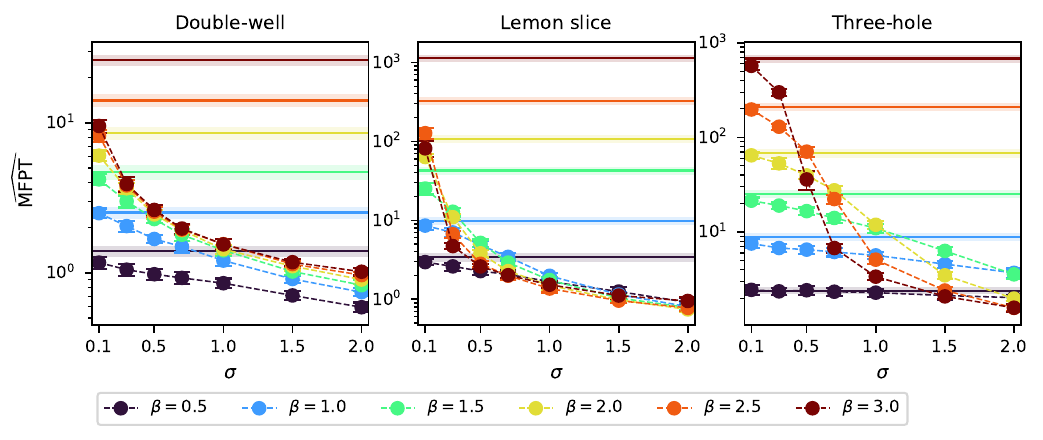}
    \caption{\textbf{Mean first-passage time to the target set $A$} for the double-well (left), Lemon-slice (middle), and three-hole (right) systems. Markers show the mean controlled MFPT over 20 independently reconstructed controllers, with error bars denoting one standard deviation across the reconstructions. Each controller-specific MFPT is estimated from $N_{\mathrm{traj}}=100$ trajectories. Solid horizontal lines show the
    uncontrolled MFPT obtained from $100$ independent trajectories, and the corresponding shaded bands indicate the standard error of the mean. Colors denote the inverse temperature $\beta$.}
    \label{fig:mfpt}
\end{figure}

\subsection{Temperature Re-weighting}\label{sec:reweighting-results}
We finally assess whether a single inexpensive simulation at a higher reference temperature can provide accurate eigen-pair estimates at lower target
temperatures. To this end, we choose $\beta_0 = 0.1$ and generate samples from the physical dynamics at this reference temperature. The samples are then re-weighted according to~\eqref{eq:temp_reweight} before assembling the gEDMD matrices. We set $(m, \gamma_{\mathrm{rff}})$ to $(5\times10^{3},0.1)$ for the double-well system, $(10^{4},0.25)$ for the lemon slice system and $(5\times10^{4},0.3)$ for the three-hole system. 
Across all three benchmark systems, the reweighted eigenvalue estimates (dashed lines) closely track the FEM reference (solid lines) over the full range of target inverse temperatures $\beta\in[0.5, 4.0]$, as shown in Fig.~\ref{fig:reweight}. To relate this accuracy to the actual sampling effort, we compare the transition time $\tau_{\mathrm{K}}(\beta)$ of the uncontrolled dynamics at the target inverse temperature with the physical duration $T_{\mathrm{data}}(\beta_0):=m\times\Delta t$ of the reference-temperature trajectory used to generate the data. The transition times $\tau_{\mathrm{K}}(\beta)$ are estimated from the Eyring-Kramers law, see~\cite{lelievre_partial_2016}. We define the dimensionless ratio $R_{\mathrm{data}}(\beta):=\frac{\tau_{\mathrm{K}}(\beta)}{T_{\mathrm{data}}(\beta_0)}$. Thus, $R_{\mathrm{data}}>1$ indicates that the trajectory used for the reweighted estimate is shorter than a single transition time of the uncontrolled dynamics at the target temperature.

As shown in the right panel of Fig.~\ref{fig:reweight}, $R_{\mathrm{data}}$ increases rapidly with $\beta$. At the largest inverse temperature it exceeds the duration of the reference-temperature trajectory by approximately one to three orders of magnitude, depending on the system. Nevertheless, the corresponding reweighted eigenvalue estimates remain close to the FEM reference. This demonstrates that, for these low-dimensional benchmark systems, temperature reweighting can recover the spectral quantity of interest from reference-temperature data generated over a time interval substantially shorter than the transition timescale of the target dynamics.

We emphasize that $R_{\mathrm{data}}$ is a timescale-based measure of sampling advantage rather than an exact computational speedup. The Eyring--Kramers time characterizes a typical barrier-crossing timescale, whereas statistically converged direct sampling at the target temperature may require multiple transitions.

\begin{figure}[h!]
    \centering
\includegraphics[width=\linewidth]{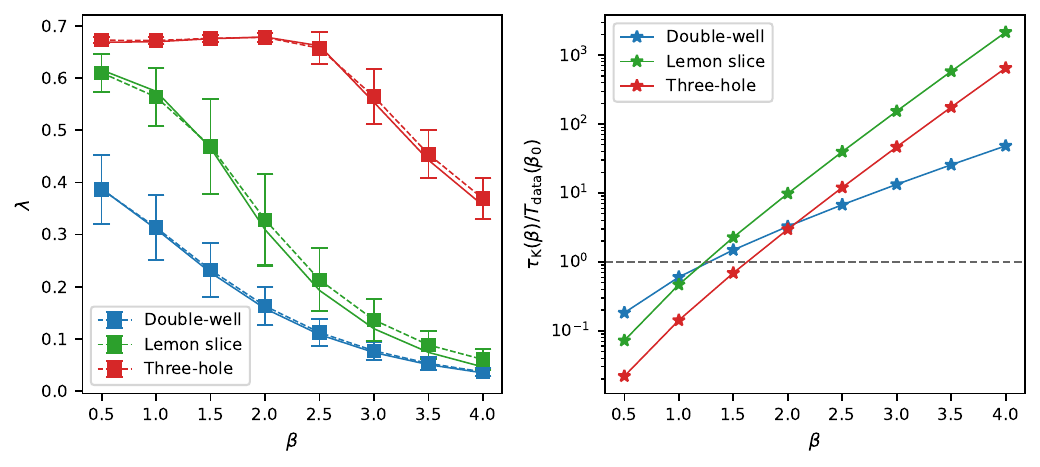}
    \caption{\textbf{Temperature re-weighting} from a single reference simulation at $\beta_{0}=0.1$ across the three benchmark systems. Left: eigenvalue $\lambda$ obtained from reweighted samples using RFF-gEDMD (dashed lines with error bars) compared with the FEM reference (solid lines). Right: ratio $\frac{\tau_{\mathrm{K}}(\beta)}{T_{\mathrm{data}}(\beta_0)}$ between the Eyring--Kramers transition time at the target inverse temperature and the physical duration of the reference-temperature
    trajectory. The horizontal line at unity indicates $\tau_{\mathrm K}(\beta)=T_{\mathrm{data}}(\beta_0)$.}
    \label{fig:reweight}
\end{figure}

\section{Application to Alanine Dipeptide}
\label{sec:ad}
In this section, we apply the proposed framework to a coarse-grained representation of the classical alanine dipeptide benchmark. We consider MD simulations of alanine dipeptide in explicit water, with the force field and simulation parameters described in Ref.~\cite{nateghi_kinetically_2025}. A $500~\mathrm{ns}$ unbiased simulation serves as reference data set. In addition, we also produced a $20\,\mathrm{ns}$ long well-tempered metadynamics simulation using the backbone dihedral angles $(\phi, \psi)$ as
collective variables. Metadynamics simulations were carried out in \textsc{Gromacs}~\cite{abraham_gromacs_2015} with the \textsc{Plumed} plug-in~\cite{the_plumed_consortium_promoting_2019}. Figure~\ref{fig:ad_energy} shows the free energy landscape in the dihedral angle space as estimated from both datasets, indicating good agreement.

The target set for the bias potential is chosen as the shallow energetic minimum in the right half of the dihedral plane, defined as
\begin{equation}
    A:=\{\,(\phi,\psi)\mid |\phi-0.9|<0.3\,\}.
\end{equation}


\begin{figure}[h!]
    \centering
    \includegraphics[width=\linewidth]{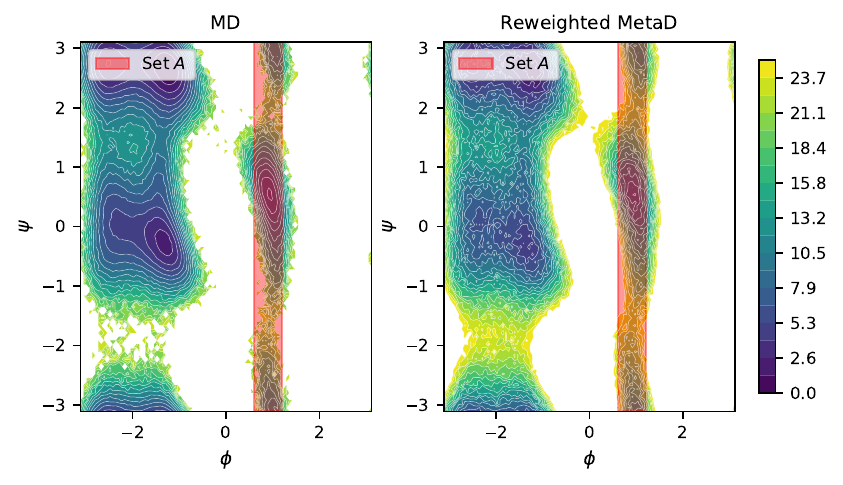}
    \caption{\textbf{Free energy of alanine dipeptide} estimated from unbiased MD trajectory (left panel) and MetaD re-weighted trajectory (right panel). The target set for the optimal control problem is indicated by the red strip.}
    \label{fig:ad_energy}
\end{figure}

 \subsection{Bias Potential and Coarse-Grained Generator}
 Before presenting numerical results, we explore an interesting connection between the variational approximation~\eqref{eq:ev_problem} and coarse-graining of the original dynamics.

 Consider a mapping from the Euclidean state space $\R^d$ into a lower-dimensional space $\R^q$, that is $\xi: \R^d \to  \R^q$ and write $\bz = \xi(\bx)$. Given the full space dynamics~\eqref{eq:sde_langevin} with generator~ $\cL$, one can define a closed Markovian dynamics in $\R^q$ by projecting the generator:
 \begin{equation}
     \cL^\xi = \cP \cL \cP,
 \end{equation}
 where $\cP$ is the Mori-Zwanzig projector. The projected generator $\cL^\xi$ gives rise to stochastic dynamics on $\R^q$:
 \begin{equation}
     \label{eq:cg_dynamics}
     \diff \BZ_t = \left[-\ba(\BZ_t)\nabla F(\BZ_t) + \frac{1}{\beta}\nabla \cdot \ba(\BZ_t) \right] \,\diff t + \sqrt{2\beta^{-1}}\ba(\BZ_t)^{1/2} \,\diff \BB_t.
 \end{equation}
 Here, $F$ is the free energy in $\R^q$ and $\ba$ is the effective diffusion field. For details on this construction, see Refs.~\cite{legollEffectiveDynamicsUsing2010a,zhang_effective_2016,nateghi_kinetically_2025}. Also note that the effective dynamics~\eqref{eq:cg_dynamics} ignores any memory effects due to the coarse-graining.

 It was shown in~\cite{zhang_effective_2016} that any Galerkin-type approximation to $\cL$ using a basis set \emph{composed of functions on $\bz$} is also a Galerkin approximation to the projected generator $\cL^\xi$. The significance of this result is the following: if
 \begin{enumerate}
     \item we choose to represent the eigenfunction $\varphi$ in~\eqref{eq:ev_problem} and hence the optimal bias $\vbias$ by functions on $\bz$:
 \begin{equation*}
     \varphi(\bx) \approx \varphi(\xi(\bx)) \approx \sum_{j=1}^n \bv^m_j \psi_j(\xi(\bx)) = \sum_{j=1}^n \bv^m_j \psi_j(\bz);
 \end{equation*}
    \item the running cost $\ell$ is also a function of $\bz$;
 \end{enumerate}
 then the generalized matrix eigenvalue problem~\eqref{eq:ev_problem} approximates the optimal control problem for the coarse-grained generator:
 \begin{equation}
 \label{eq:ev_problem_cg}
     \cA^\xi \varphi = (\sigma \ell - \cL^\xi)\varphi = \lambda \varphi.
 \end{equation}
 The resulting bias potential $\vbias$ can be used to guide the projected dynamics~\eqref{eq:cg_dynamics}:
 \begin{equation}
     \label{eq:cg_biased_dynamics}
     \diff \BZ_t = \left[-\ba(\BZ_t)(\nabla F(\BZ_t) + \nabla \vbias(\BZ_t)) + \frac{1}{\beta}\nabla \cdot \ba(\BZ_t) \right] \,\diff t + \sqrt{2\beta^{-1}}\ba(\BZ_t)^{1/2} \,\diff \BB_t.
 \end{equation}

\subsection{Spectral and Potential Comparison}
With these considerations in mind, we apply the gEDMD approximation to approximate the eigenvalue $\lambda$ in~\eqref{eq:ev_problem_cg} using a basis set of random features on the $(\phi, \psi)$-dihedral space. We compare the spectral quantities obtained from the equilibrium MD and re-weighted MetaD data. Fig.~\ref{fig:ad_spectral} summarizes this comparison together with the resulting bias and optimal potentials. 

Panel~(a) shows the principal eigenvalue $\lambda$ as a function of the running cost prefactor $\sigma$. The two estimates exhibit consistent dependence on $\sigma$ over the range considered. Note that, as shown in the inset, the limiting slope of the principal eigenvalue $\lambda(\sigma)$ (dashed lines) as $\sigma\to 0$ approximates the equilibrium probability $\mathbb{E}^{\mu}[\ell(\bx)] = 1-\mu(A)$. Table~\ref{tab:ad_equilibrium} reports a quantitative comparison of $\lambda^{\prime}(0)$ estimated from the gEDMD eigenvalues using the second-order forward finite-difference 
$$
\lambda^{\prime}(0) \approx \frac{4\lambda(h)-\lambda(2h)}{2h}, \qquad h=10^{-3},
$$
compared to a direct estimate obtained either from the unbiased MD data or metadynamics simulations.

For $\sigma=1.0$, panels~(b) and~(c) show the bias potentials $V_{\mathrm{bias}}$ reconstructed from the MD and MetaD data, respectively. Both estimates are found to be in agreement. The corresponding effective controlled potentials $V_{\mathrm{opt}}$ are shown in panels~(d) and~(e). We can see that the bias potential significantly lowers the free energy in the right-hand minimum corresponding to the target state relative to the competing minima, to facilitate transitions. We next examine whether this agreement also leads to comparable acceleration of the closed-loop dynamics.

\begin{figure}
    \centering
    \includegraphics[width=\linewidth]{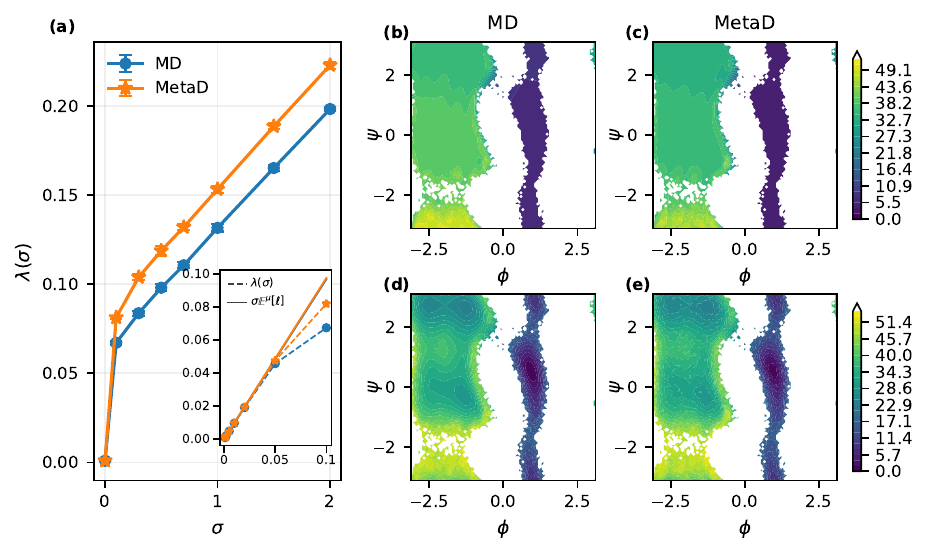}
    \caption{\textbf{Comparison of the spectral quantities for alanine dipeptide}: (a) Eigenvalue $\lambda(\sigma)$ from MD and reweighted MetaD data. The inset magnifies the small-$\sigma$ regime; dashed curves with markers denote the spectral estimates, while solid lines show the linear extrapolation $\sigma\mathbb{E}^{\mu}[\ell]$. (b,c) Bias potentials $V_{\mathrm{bias}}$ for $\sigma=1.0$ obtained from MD and MetaD data, respectively. (d,e) Corresponding optimal potentials $V_{\mathrm{opt}}$. Within each row, MD and MetaD use the same color scale.}
    \label{fig:ad_spectral}
\end{figure}

\begin{table}[t]
    \centering
    \begin{tabular}{lccc}
        \toprule
        Dataset & Full-data $\mu(A)$ & Full-data $\mathbb{E}^{\mu}[\ell]$ &$\lambda^{\prime}(0)$\\
        \midrule
        MD & $0.03145$ & $0.96855$ & $0.96856 \pm 8.3\times10^{-7}$\\
        MetaD & $0.02393$ & $0.97607$ & $0.97608 \pm 4.9\times10^{-7}$ \\
        \bottomrule
    \end{tabular}
\caption{Equilibrium target probabilities and small-$\sigma$ spectral
    estimates for alanine dipeptide. The reference target probability $\mu(A)$ is computed from the full dataset, with $\mathbb{E}^{\mu}[\ell]=1-\mu(A)$. The derivative $\lambda^{\prime}(0)$ is estimated using the second-order forward finite difference with $h=10^{-3}$ and is reported as the mean $\pm$ standard deviation
    over 20 RFF realizations.}
    \label{tab:ad_equilibrium}
\end{table}

\subsection{Closed-Loop Validation}
Closed-loop trajectories are generated using the effective dynamics~\eqref{eq:cg_biased_dynamics} in the space of collective variables $\bz = (\phi,\psi)$. As before, we quantify the acceleration in terms of the mean first-passage time (MFPT) to the set $A$. Starting from the same initial condition, we simulate $N_{\mathrm{traj}} = 100$ independent trajectories for each reconstructed controller using a time step $\Delta t=10^{-3}\,\mathrm{ps}$. The MFPT is computed using the same first-passage criterion as in Sec.~\ref{sec:mfpt} and shown in Fig.~\ref{fig:mfpt_ad}. For each value of $\sigma$, the reported controlled MFPT is the mean over the independently reconstructed eigen-pairs, with error bars indicating the corresponding standard deviation. The uncontrolled reference is estimated directly from an ensemble of $N_{\mathrm{traj}}=100$ trajectories; the shaded region indicates the standard error of the estimated uncontrolled MFPT. Note that time is re-scaled and accelerated for the coarse-grained dynamics~\eqref{eq:cg_dynamics}, the corresponding MFPT estimate based on the all-atom simulation data is $\approx 35\,\mathrm{ns}$.

\begin{figure}
    \centering
    \includegraphics[width=\linewidth]{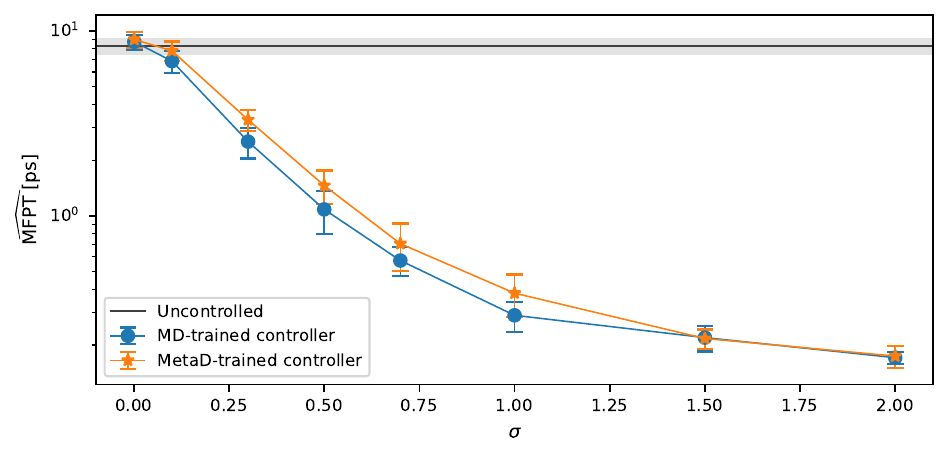}
    \caption{\textbf{Mean first-passage time (MFPT) for alanine dipeptide} under controllers reconstructed from equilibrium MD data (circles) and reweighted MetaD data (stars), as a function of the running-cost prefactor $\sigma$. Controlled MFPTs are computed from $N_{\mathrm{traj}}=100$ trajectories for each reconstructed eigen-pair; markers show the mean over 20 independently reconstructed eigen-pairs and error bars denote one standard deviation across reconstructions. The horizontal line shows the uncontrolled MFPT estimated from $N_{\mathrm{traj}}=100$ trajectories, and the shaded band denotes the standard error of this mean. The logarithmic vertical scale highlights the systematic reduction in transition time with increasing $\sigma$.}
    \label{fig:mfpt_ad}
\end{figure}

We observe that both the MD- and MetaD-trained controllers substantially reduce the MFPT relative to the uncontrolled dynamics. As $\sigma$ increases, the MFPT decreases systematically, consistent with the increasing strength of the state penalty and the resulting control bias. Importantly, the controllers reconstructed from equilibrium MD and reweighted MetaD data exhibit closely consistent first-passage behavior over the full range of $\sigma$ considered. 


\section{Conclusions}
We have studied the calculation of optimal biasing potentials to accelerate metastable transitions and compute equilibrium expectations in molecular dynamics simulations. Starting from an ergodic optimal control problem, we have exploited that the OCP can be reformulated as a linear eigenvalue problem for the generator of the physical dynamics. We have formulated data-driven learning methods to solve this eigenvalue problem, and to determine equilibrium expectations by extrapolating the slope of these eigenvalues at $\sigma = 0$. We have studied in detail the robustness and accuracy of these learning methods using both model problems and MD simulations of the alanine dipeptide. Finally, we have also analyzed the relation of the ergodic OCP to coarse-grained dynamics and generator approximation on reaction coordinates.

As already noted in the introduction, future work will focus on lifting the need for equilibrium samples, on extending the scope of the method to larger systems, and on applying the method in an adaptive sampling context.

\section*{Declarations}
The authors have no conflicts of interest to disclose.

\paragraph{Data Availability}
Data and codes to reproduce the results presented in this study are available from the public repository hosted on \textsc{Zenodo} at \url{10.5281/zenodo.22070320}.

\paragraph{Use of Artificial Intelligence} During the preparation of this work, \textsc{Claude Sonnet 4.6} and \textsc{ChatGPT Sol 5.6} were used as writing aids, primarily for language refinement, as well as for code polishing and debugging. All mathematical derivations, numerical results, and scientific conclusions are those of the authors. The authors independently verified all AI-assisted code and reviewed all AI-assisted content, and remain fully responsible for the content of the manuscript. 

\paragraph{Funding and Acknowledgemts} L.G. is funded by Deutsche Forschungsgemeinschaft (DFG, German Research Foundation) - 314838170, GRK 2297 MathCoRe. T.B. acknowledges funding by the Deutsche Forschungsgemeinschaft (DFG, German Research Foundation) -- Project-ID 544702565. The work C.H. has been partially funded by the German Federal Government, the Federal Ministry of Research, Technology and Space, and the State of Brandenburg within the framework of the joint project EIZ: Energy Innovation Center (project numbers 85056897 and 03SF0693A) with funds from the Structural Development Act
(Strukturstärkungsgesetz) for coal-mining regions.

\bibliographystyle{unsrt}
\bibliography{expEMUS}

@book{arapostathis2012ergodic,
  title={Ergodic control of diffusion processes},
  author={Arapostathis, Ari and Borkar, Vivek S and Ghosh, Mrinal K},
  number={143},
  year={2012},
  publisher={Cambridge University Press}
}

@book{frenkelUnderstandingMolecularSimulation2023,
    address = {S.l.},
    edition = {Third edition},
    title = {Understanding molecular simulation: from algorithms to applications},
    isbn = {978-0-323-91318-8},
    shorttitle = {Understanding molecular simulation},
    language = {eng},
    publisher = {ELSEVIER ACADEMIC PRESS},
    author = {Frenkel, Daan and Smit, Berend},
    year = {2023},
}

@article{torrie_nonphysical_1977,
    title = {Nonphysical sampling distributions in {Monte} {Carlo} free-energy estimation: {Umbrella} sampling},
    volume = {23},
    number = {2},
    journal = {Journal of computational physics},
    author = {Torrie, Glenn M and Valleau, John P},
    year = {1977},
    pages = {187--199},
}

@article{laio_escaping_2002,
    title = {Escaping free-energy minima},
    volume = {99},
    issn = {0027-8424},
    doi = {10.1073/pnas.202427399},
    number = {20},
    journal = {Proceedings of the National Academy of Sciences},
    author = {Laio, Alessandro and Parrinello, Michele},
    month = oct,
    year = {2002},
    pages = {12562--12566},
}

@article{swendsen_replica_1986,
    title = {Replica {Monte} {Carlo} {Simulation} of {Spin}-{Glasses}},
    volume = {57},
    url = {https://link.aps.org/doi/10.1103/PhysRevLett.57.2607},
    doi = {10.1103/PhysRevLett.57.2607},
    number = {21},
    urldate = {2025-01-28},
    journal = {Physical Review Letters},
    author = {Swendsen, Robert H. and Wang, Jian-Sheng},
    month = nov,
    year = {1986},
    pages = {2607--2609},
}

@article{voter_parallel_1998,
    title = {Parallel replica method for dynamics of infrequent events},
    volume = {57},
    copyright = {http://link.aps.org/licenses/aps-default-license},
    issn = {0163-1829, 1095-3795},
    url = {https://link.aps.org/doi/10.1103/PhysRevB.57.R13985},
    doi = {10.1103/PhysRevB.57.R13985},
    language = {en},
    number = {22},
    urldate = {2026-08-18},
    journal = {Physical Review B},
    author = {Voter, Arthur F.},
    month = jun,
    year = {1998},
    pages = {R13985--R13988},
}

@article{darve_calculating_2001,
    title = {Calculating free energies using average force},
    volume = {115},
    issn = {0021-9606, 1089-7690},
    url = {https://pubs.aip.org/jcp/article/115/20/9169/442127/Calculating-free-energies-using-average-force},
    doi = {10.1063/1.1410978},
    language = {en},
    number = {20},
    urldate = {2026-08-18},
    journal = {The Journal of Chemical Physics},
    author = {Darve, Eric and Pohorille, Andrew},
    month = nov,
    year = {2001},
    pages = {9169--9183},
}

@article{schutte_optimal_2012,
    title = {Optimal control of molecular dynamics using {Markov} state models},
    volume = {134},
    copyright = {http://www.springer.com/tdm},
    issn = {0025-5610, 1436-4646},
    url = {http://link.springer.com/10.1007/s10107-012-0547-6},
    doi = {10.1007/s10107-012-0547-6},
    language = {en},
    number = {1},
    urldate = {2026-08-18},
    journal = {Mathematical Programming},
    author = {Schütte, Christof and Winkelmann, Stefanie and Hartmann, Carsten},
    month = aug,
    year = {2012},
    pages = {259--282},
}

@inproceedings{holdijk_stochastic_2023,
    address = {New Orleans, Louisiana, USA},
    title = {Stochastic {Optimal} {Control} for {Collective} {Variable} {Free} {Sampling} of {Molecular} {Transition} {Paths}},
    isbn = {9781713899112},
    url = {http://www.proceedings.com/075280-3481.html},
    doi = {10.52202/075280-3481},
    urldate = {2026-08-18},
    booktitle = {Advances in {Neural} {Information} {Processing} {Systems} 36},
    publisher = {Neural Information Processing Systems Foundation, Inc. (NeurIPS)},
    author = {Holdijk, Lars and Du, Yuanqi and Hooft, Ferry and Jaini, Priyank and Ensing, Berend and Welling, Max},
    year = {2023},
    pages = {79540--79556},
}

@article{hartmann_efficient_2012,
    title = {Efficient rare event simulation by optimal nonequilibrium forcing},
    volume = {2012},
    issn = {1742-5468},
    url = {https://iopscience.iop.org/article/10.1088/1742-5468/2012/11/P11004},
    doi = {10.1088/1742-5468/2012/11/P11004},
    number = {11},
    urldate = {2026-08-18},
    journal = {Journal of Statistical Mechanics: Theory and Experiment},
    author = {Hartmann, Carsten and Schütte, Christof},
    month = nov,
    year = {2012},
    pages = {P11004},
}

@article{vandeneijnden_rare_2012,
    title = {Rare {Event} {Simulation} of {Small} {Noise} {Diffusions}},
    volume = {65},
    issn = {0010-3640, 1097-0312},
    url = {https://onlinelibrary.wiley.com/doi/10.1002/cpa.21428},
    doi = {10.1002/cpa.21428},
    language = {en},
    number = {12},
    urldate = {2026-08-18},
    journal = {Communications on Pure and Applied Mathematics},
    author = {Vanden‐Eijnden, Eric and Weare, Jonathan},
    month = dec,
    year = {2012},
    pages = {1770--1803},
}

@article{klus_data-driven_2020,
    title = {Data-driven approximation of the {Koopman} generator: {Model} reduction, system identification, and control},
    volume = {406},
    doi = {10.1016/j.physd.2020.132416},
    journal = {Physica D: Nonlinear Phenomena},
    author = {Klus, Stefan and Nüske, Feliks and Peitz, Sebastian and Niemann, Jan Hendrik and Clementi, Cecilia and Schütte, Christof},
    year = {2020},
}

@article{nuske_efficient_2023,
    title = {Efficient approximation of molecular kinetics using random {Fourier} features},
    volume = {159},
    issn = {0021-9606, 1089-7690},
    url = {https://pubs.aip.org/jcp/article/159/7/074105/2907133/Efficient-approximation-of-molecular-kinetics},
    doi = {10.1063/5.0162619},
    language = {en},
    number = {7},
    urldate = {2025-01-09},
    journal = {The Journal of Chemical Physics},
    author = {Nüske, Feliks and Klus, Stefan},
    month = aug,
    year = {2023},
    pages = {074105},
}

@article{rahimiRandomFeaturesLargescale2007a,
    title = {Random features for large-scale kernel machines},
    volume = {20},
    journal = {Advances in Neural Information Processing Systems},
    author = {Rahimi, A. and Recht, B.},
    year = {2007},
}

@article{geuzaine_gmsh_2009,
    title = {Gmsh: {A} 3‐{D} finite element mesh generator with built‐in pre‐ and post‐processing facilities},
    volume = {79},
    copyright = {http://onlinelibrary.wiley.com/termsAndConditions\#vor},
    issn = {0029-5981, 1097-0207},
    shorttitle = {Gmsh},
    url = {https://onlinelibrary.wiley.com/doi/10.1002/nme.2579},
    doi = {10.1002/nme.2579},
    language = {en},
    number = {11},
    urldate = {2026-08-19},
    journal = {International Journal for Numerical Methods in Engineering},
    author = {Geuzaine, Christophe and Remacle, Jean‐François},
    month = sep,
    year = {2009},
    pages = {1309--1331},
}

@misc{baratta_dolfinx_2023,
    title = {{DOLFINx}: {The} next generation {FEniCS} problem solving environment},
    copyright = {Creative Commons Attribution 4.0 International},
    shorttitle = {{DOLFINx}},
    url = {https://zenodo.org/doi/10.5281/zenodo.10447666},
    doi = {10.5281/ZENODO.10447666},
    language = {en},
    urldate = {2026-08-19},
    publisher = {Zenodo},
    author = {Baratta, Igor A. and Dean, Joseph P. and Dokken, Jørgen S. and Habera, Michal and Hale, Jack S. and Richardson, Chris N. and Rognes, Marie E. and Scroggs, Matthew W. and Sime, Nathan and Wells, Garth N.},
    month = dec,
    year = {2023},
}

@article{lelievre_partial_2016,
    title = {Partial differential equations and stochastic methods in moleculardynamics},
    volume = {25},
    issn = {0962-4929, 1474-0508},
    url = {https://www.cambridge.org/core/journals/acta-numerica/article/abs/partial-differential-equations-and-stochastic-methods-in-moleculardynamics/60F8398275D5150AA54DD98F745A9285},
    doi = {10.1017/S0962492916000039},
    language = {en},
    urldate = {2026-01-07},
    journal = {Acta Numerica},
    author = {Lelièvre, Tony and Stoltz, Gabriel},
    month = may,
    year = {2016},
    pages = {681--880},
}

@article{nateghi_kinetically_2025,
    title = {Kinetically {Consistent} {Coarse} {Graining} {Using} {Kernel}-{Based} {Extended} {Dynamic} {Mode} {Decomposition}},
    volume = {21},
    copyright = {https://creativecommons.org/licenses/by/4.0/},
    issn = {1549-9618, 1549-9626},
    url = {https://pubs.acs.org/doi/10.1021/acs.jctc.5c00479},
    doi = {10.1021/acs.jctc.5c00479},
    language = {en},
    number = {15},
    urldate = {2025-10-08},
    journal = {Journal of Chemical Theory and Computation},
    author = {Nateghi, Vahid and Nüske, Feliks},
    month = aug,
    year = {2025},
    pages = {7236--7248},
}

@article{the_plumed_consortium_promoting_2019,
    title = {Promoting transparency and reproducibility in enhanced molecular simulations},
    volume = {16},
    issn = {1548-7091, 1548-7105},
    url = {https://www.nature.com/articles/s41592-019-0506-8},
    doi = {10.1038/s41592-019-0506-8},
    language = {en},
    number = {8},
    urldate = {2026-08-20},
    journal = {Nature Methods},
    author = {{The PLUMED consortium}},
    month = aug,
    year = {2019},
    pages = {670--673},
}

@article{abraham_gromacs_2015,
    title = {{GROMACS}: {High} performance molecular simulations through multi-level parallelism from laptops to supercomputers},
    volume = {1-2},
    issn = {23527110},
    shorttitle = {{GROMACS}},
    url = {https://linkinghub.elsevier.com/retrieve/pii/S2352711015000059},
    doi = {10.1016/j.softx.2015.06.001},
    language = {en},
    urldate = {2026-08-20},
    journal = {SoftwareX},
    author = {Abraham, Mark James and Murtola, Teemu and Schulz, Roland and Páll, Szilárd and Smith, Jeremy C. and Hess, Berk and Lindahl, Erik},
    month = sep,
    year = {2015},
    pages = {19--25},
}

@article{legollEffectiveDynamicsUsing2010a,
    title = {Effective dynamics using conditional expectations},
    volume = {23},
    issn = {0951-7715},
    doi = {10.1088/0951-7715/23/9/006},
    number = {9},
    journal = {Nonlinearity},
    author = {Legoll, Frédéric and Lelièvre, Tony},
    month = sep,
    year = {2010},
    pages = {2131--2163},
}

@article{zhang_effective_2016,
    title = {Effective dynamics along given reaction coordinates, and reaction rate theory},
    volume = {195},
    issn = {1359-6640},
    doi = {10.1039/C6FD00147E},
    journal = {Faraday Discussions},
    author = {Zhang, Wei and Hartmann, Carsten and Schütte, Christof},
    year = {2016},
    pages = {365--394},
}

@article{nuske_finite-data_2023,
    title = {Finite-{Data} {Error} {Bounds} for {Koopman}-{Based} {Prediction} and {Control}},
    volume = {33},
    doi = {10.1007/s00332-022-09862-1},
    number = {1},
    journal = {Journal of Nonlinear Science},
    author = {Nüske, Feliks and Peitz, Sebastian and Philipp, Friedrich and Schaller, Manuel and Worthmann, Karl},
    year = {2023},
}

@book{teschl_mathematical_2014,
    address = {Providence, Rhode Island},
    series = {Graduate {Studies} in {Mathematics}},
    title = {Mathematical {Methods} in {Quantum} {Mechanics}},
    volume = {157},
    isbn = {9781470418885 9781470419004 9781470425630 9781470417048},
    url = {https://www.ams.org/gsm/157},
    doi = {10.1090/gsm/157},
    language = {en},
    urldate = {2026-08-25},
    publisher = {American Mathematical Society},
    author = {Teschl, Gerald},
    month = nov,
    year = {2014},
}

@book{bakry_analysis_2014,
    address = {Cham},
    series = {Grundlehren der mathematischen {Wissenschaften}},
    title = {Analysis and {Geometry} of {Markov} {Diffusion} {Operators}},
    volume = {348},
    copyright = {https://www.springernature.com/gp/researchers/text-and-data-mining},
    isbn = {9783319002262 9783319002279},
    url = {https://link.springer.com/10.1007/978-3-319-00227-9},
    doi = {10.1007/978-3-319-00227-9},
    language = {en},
    urldate = {2026-08-25},
    publisher = {Springer International Publishing},
    author = {Bakry, Dominique and Gentil, Ivan and Ledoux, Michel},
    year = {2014},
}

@misc{kingma_adam_2014,
    title = {Adam: {A} {Method} for {Stochastic} {Optimization}},
    copyright = {arXiv.org perpetual, non-exclusive license},
    shorttitle = {Adam},
    url = {https://arxiv.org/abs/1412.6980},
    doi = {10.48550/ARXIV.1412.6980},
    urldate = {2026-08-25},
    publisher = {arXiv},
    author = {Kingma, Diederik P. and Ba, Jimmy},
    year = {2014},
}

@misc{bradbury_jax_2018,
    title = {{JAX}: composable transformations of {Python}+{NumPy} programs},
    url = {http://github.com/jax-ml/jax},
    author = {Bradbury, James and Frostig, Roy and Hawkins, Peter and Johnson, Matthew James and Katariya, Yash and Leary, Chris and Maclaurin, Dougal and Necula, George and Paszke, Adam and VanderPlas, Jake},
    year = {2018},
}

\appendix

\section{Additional Validation Results}
\label{sec:app_validation}
In this section we show additional validation results for the low-dimensional model systems. Figure~\ref{fig:ME_lambda_Bsp} shows the cost-to-go $\lambda$ as a function of data size $m$ and prefactor $\sigma$ if a B-spline basis is used.

\begin{figure}[h!]
    \centering
\includegraphics[width=\linewidth]{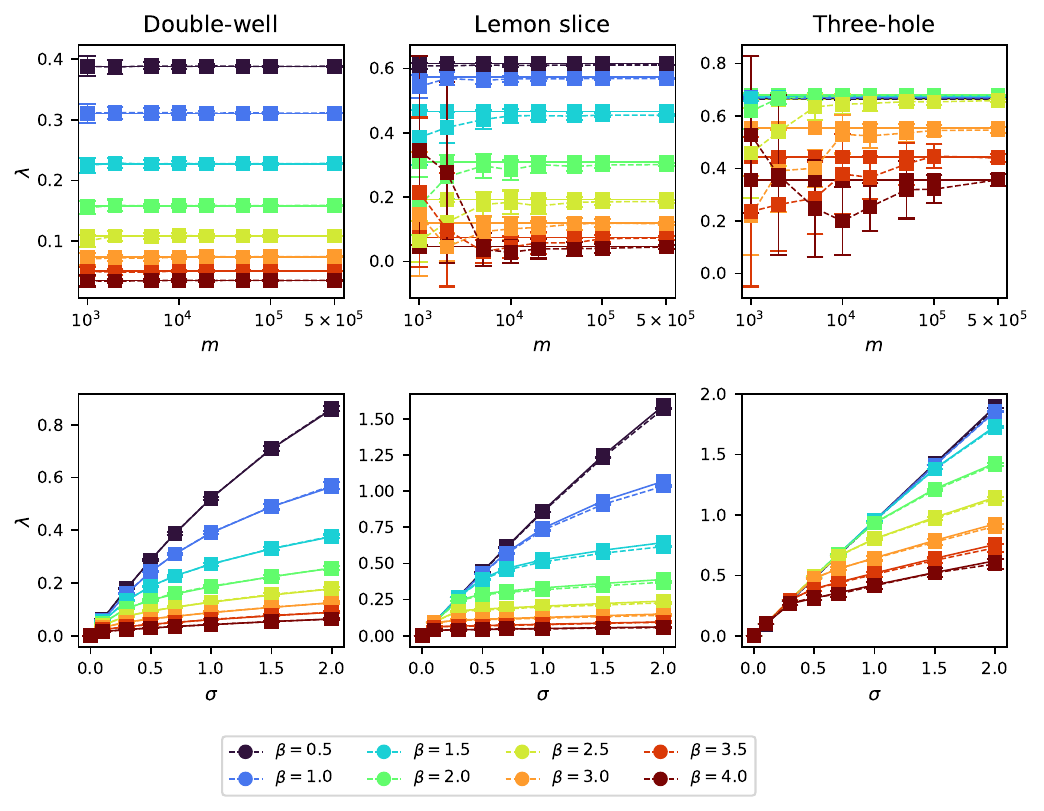}\caption{As Fig.~\ref{fig:ME_lambda_RFF}, using a B-spline basis in place of RFF.
    Left (Double-well): $n=30$ and bottom panel fixes $m=10^{4}$.
    Middle (Lemon slice): $n=15\times15$ and bottom panel fixes $m=5\times10^{5}$.
    Right (Three-hole): $n=10\times 10$ and bottom panel fixes $m=5\times10^{5}$.}
    \label{fig:ME_lambda_Bsp}
\end{figure}

Figure~\ref{fig:ME_dw_bias_RFF} shows the reconstruction of the optimal bias potential for the double-well example and the RFF basis.
\begin{figure}[h!]
    \centering
\includegraphics[width=0.8\linewidth]{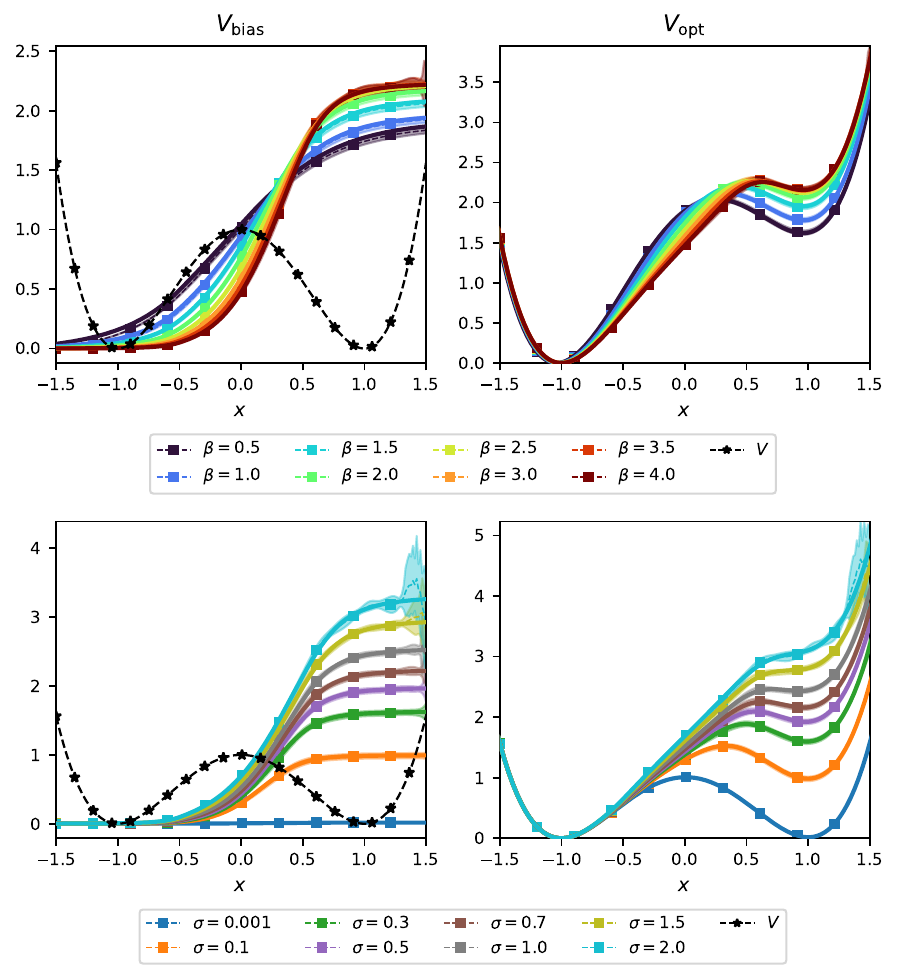}
    \caption{As Fig.~\ref{fig:ME_dw_bias_Bsp} using RFF basis at $m = 10^{4}$ and  $\gamma_{\mathrm{rff}}=0.3$.}
    \label{fig:ME_dw_bias_RFF}
\end{figure}

\section{Proof of Proposition~\ref{prop:ev_problem_gen}}
\label{app:derivations}
We show the steps to derive the eigenvalue equation~\eqref{eq:ev_problem} and the statements of Proposition~\ref{prop:ev_problem_gen}.

\begin{proof}[Proof of Proposition~\ref{prop:ev_problem_gen}]
We proceed in five steps.

\medskip
\noindent\textbf{Step 1 (Existence of the ergodic HJB pair).}
We assume that the running cost $c(\bx,\bu) := \sigma \ell(\bx) + \frac{\beta}{4}|\bu|^2$ and the potential $V$ are such that the ergodic Hamilton-Jacobi-Bellman equation can be solved (see~\cite[Thm.~3.6.6]{arapostathis2012ergodic} for conditions): there exists $W \in C^2(\mathbb{R}^d)$ and a constant $\lambda\in\mathbb{R}$, together satisfying $W(0)=0$, $\inf W > -\infty$, and
\begin{equation}\label{ieq:rho_rhostar}
    \lambda\le\lambda^{*},
\end{equation}
such that the ergodic HJB equation for~\eqref{eq:ergodic_ocp} holds true:
\begin{equation}
\label{eq:hjb_min}
\min_{\bu \in \mathbb{R}^k}\Big[ \cL W(\bx) + \bu\cdot \nabla W(\bx) + \sigma\ell(\bx) + \frac{\beta}{4}\|\bu\|^2 \Big] = \lambda \qquad \forall\, \bx \in \mathbb{R}^d.
\end{equation}

\medskip
\noindent\textbf{Step 2 (Pointwise minimization).}
For fixed $p\in\R^k$, completing the square gives
$$
\bu\cdot p + \frac{\beta}{4}\|\bu\|^2 = \frac{\beta}{4}\Big\|\bu+\frac{2}{\beta}p\Big\|^2 - \frac{1}{\beta}\|p\|^2 \;\ge\;-\frac{1}\beta\|p\|^2,
$$
with equality iff $\bu=-\frac{2}{\beta}p$. Setting $p=\nabla W(\bx)$ in~\eqref{eq:hjb_min},

\begin{equation}
\label{eq:ustar}
\bu^*(\bx) = -\frac{2}{\beta}\nabla W(\bx),
\end{equation}
and~\eqref{eq:hjb_min} reduces to the semi-linear equation
\begin{equation}
\label{eq:semilinear-hjb}
\cL W(\bx) - \frac{1}{\beta}\|\nabla W(\bx)\|^2 + \sigma\ell(\bx) = \lambda \quad \forall\, \bx\in\R^d.
\end{equation}
Defining $V_{\mathrm{bias}}:= \frac{2}{\beta}W$ turns~\eqref{eq:ustar} into
$\bu^*(\bx)=-\nabla V_{\mathrm{bias}}(\bx)$, which is claim~(ii).

\medskip
\noindent\textbf{Step 3 (Logarithmic transform).}
Set $\varphi:= e^{-W} > 0$, so $W=-\log\varphi$ and
$V_{\mathrm{bias}}=-\frac{2}{\beta}\log\varphi$. Then
$$
\nabla W = -\frac{\nabla\varphi}{\varphi}, \qquad
\Delta W = -\frac{\Delta\varphi}{\varphi} + \frac{\|\nabla\varphi\|^2}{\varphi^2},
$$
so, using $\cL=-\nabla V\cdot\nabla+\beta^{-1}\Delta$,
$$
\cL W = \frac{\nabla V\cdot\nabla\varphi}{\varphi} - \frac{1}{\beta}\frac{\Delta\varphi}{\varphi} + \frac{1}{\beta}\frac{\|\nabla\varphi\|^2}{\varphi^2},
\qquad
\frac{1}{\beta}\|\nabla W\|^2 = \frac{1}{\beta}\frac{\|\nabla\varphi\|^2}{\varphi^2}.
$$
The quadratic terms cancel exactly in $\cL W - \frac{1}{\beta}\|\nabla W\|^2$, leaving
$$
\cL W - \frac{1}{\beta}\|\nabla W\|^2
= \frac{\nabla V\cdot\nabla\varphi}{\varphi} - \frac{1}{\beta}\frac{\Delta\varphi}{\varphi}
= -\frac{1}{\varphi}\cL\varphi.
$$
Substituting into~\eqref{eq:semilinear-hjb} and multiplying by $\varphi$,
\begin{equation}
(\sigma\ell - \cL)\varphi = \lambda\,\varphi,
\end{equation}
which is exactly~\eqref{eq:ev_problem}: no rescaling of $\cL$ is needed, because the control-cost coefficient $\beta/4$ in~\eqref{eq:running_cost} was chosen precisely to make this cancellation exact.

\begin{remark}[Why $\beta/4$, precisely]\label{rem:why-beta4}
For a general coefficient $\eta$ in place of $\beta/4$ in~\eqref{eq:running_cost}, the
minimizer becomes $u^*=-\frac{1}{2\eta}\nabla W$ and the semilinear HJB equation reads
$\cL W-\tfrac{1}{4\eta}\|\nabla W\|^2+\sigma\ell=\lambda$. The same substitution
$\varphi=e^{-W}$ then leaves
$$
\cL W - \frac{1}{4\eta}\|\nabla W\|^2
= -\frac{\cL\varphi}{\varphi} + \left(\frac{1}{\beta}-\frac{1}{4\eta}\right)\frac{\|\nabla\varphi\|^2}{\varphi^2},
$$
so the nonlinear term cancels, linearizing the equation into~\eqref{eq:ev_problem}, if and only if $\eta=\beta/4$.

Equivalently: allowing a \emph{tuned} substitution $W:=-c_1\cdot\log\varphi$ with
$c_1=4\eta/\beta$ linearizes the semilinear HJB equation for \emph{any} $\eta>0$,
but leaves the rescaled equation
$\big(\sigma\ell-c_1\cL\big)\varphi=\lambda\varphi$;
only $\eta=\beta/4$ makes $c_1=1$ and removes the rescaling factor
$4\eta/\beta$ from in front of $\cL$, giving~\eqref{eq:ev_problem} exactly,
against the bare generator $\cL$.
\end{remark}

\medskip
\noindent\textbf{Step 4 (Ground state and principal eigenvalue).}
The uncontrolled Langevin dynamics are reversible with respect to $\diff\mu(\bx)\propto e^{-\beta V(\bx)}\,\diff\bx$, and $\cL$ is essentially self-adjoint on $\mathbb{L}^2(\mu)$. Moreover, for sufficiently
regular $f$,
$$
\langle f,-\cL f\rangle_\mu = \frac{1}{\beta}\int_{\mathbb{R}^d} \|\nabla f\|^2\,\diff\mu.
$$
Consequently, the operator $\sigma\ell-\cL$
is essentially self-adjoint and nonnegative, with quadratic form
$$
\langle f,(\sigma\ell-\cL)f\rangle_\mu = \frac{1}{\beta}\int\|\nabla f\|^2\,\diff\mu + \sigma\int\ell f^2\,\diff\mu.
$$  
Its lowest eigenvalue is characterized by the Rayleigh--Ritz formula
$$
\lambda_0 = \inf_{f\neq0} \frac{\beta^{-1}\int\|\nabla f\|^2\,\diff\mu+\sigma\int\ell f^2\,\diff\mu}{\int f^2\,\diff\mu}.
$$
Since the eigenfunction $\varphi=e^{-W}$ constructed in Step~3 is strictly
positive, the principal-eigenvalue characterization identifies
$\lambda=\lambda_0$.

\medskip
\noindent\textbf{Step 5 (Verification and optimality).} Steps 1-4 constructed a pair $(W,\lambda)$ solving the ergodic HJB~\eqref{eq:hjb_min}, with minimizing control $\bu^*(\bx) = -\frac{2}{\beta}\nabla W(\bx)$ and $\lambda$ the smallest eigenvalue of $\sigma\ell-\cL$ on $\mathbb{L}^{2}(\mu)$. By the standard verification theorem for ergodic control (\cite[Thm.~3.5.6/3.5.8 ]{arapostathis2012ergodic}), any such solution pair automatically satisfies $\lambda=\lambda_{\bu^*}=\lambda^*\ge0$ for every $\bx$ -- proving claim (i) -- with the associated feedback control $\bu^{*}$ being stable, proving claim (ii).
\end{proof}

\end{document}